\documentclass[sigconf, nonacm]{acmart}

\newcommand\vldbdoi{XX.XX/XXX.XX}
\newcommand\vldbpages{XXX-XXX}
\newcommand\vldbvolume{19}
\newcommand\vldbissue{1}
\newcommand\vldbyear{2026}
\newcommand\vldbauthors{\authors}
\newcommand\vldbtitle{\shorttitle}
\newcommand\vldbavailabilityurl{https://github.com/mpi-dsg/grappa}
\newcommand\vldbpagestyle{empty}

\makeatletter
\def\input@path{{metadata/}}
\makeatother

\usepackage[a-2b]{pdfx}
\usepackage{enumitem}
\usepackage{subcaption}
\usepackage{multirow}
\usepackage{soul}
\usepackage[linesnumbered,lined,ruled,vlined]{algorithm2e}
\usepackage{balance}

\newcommand{\systemname}{Grappa}
\newtheorem*{theorem*}{Theorem}

\soulregister{\systemname}{7} %

\begin{document}
\title{\systemname{}: Gradient-Only Communication for Scalable Graph Neural Network Training}

\author{Chongyang Xu}
\affiliation{%
 \institution{Max Planck Institute for Software Systems (MPI-SWS)}
 \city{Saarbrücken}
 \country{Germany}
}
\email{cxu@mpi-sws.org}

\author{Christoph Siebenbrunner}
\affiliation{%
 \institution{Vienna University of Economics and Business (WU)}
 \city{Vienna}
 \country{Austria}
}
\email{christoph.siebenbrunner@wu.ac.at}

\author{Laurent Bindschaedler}
\affiliation{%
 \institution{Max Planck Institute for Software Systems (MPI-SWS)}
 \city{Saarbrücken}
 \country{Germany}
}
\email{bindsch@mpi-sws.org}

\renewcommand{\shorttitle}{\systemname{}: Gradient-Only Communication for Scalable Graph Neural Network Training}

\begin{abstract}
Cross‑partition edges dominate the cost of distributed GNN training: fetching remote features and activations per iteration overwhelms the network as graphs deepen and partition counts grow. \systemname{} is a distributed GNN training framework that enforces \emph{gradient‑only communication}: during each iteration, partitions train in isolation and exchange only gradients for the global update. To recover accuracy lost to isolation, \systemname{} (i) periodically \emph{repartitions} to expose new neighborhoods and (ii) applies a lightweight \emph{coverage‑corrected gradient aggregation} inspired by importance sampling. We present an asymptotically unbiased estimator for gradient correction, which we use to develop a minimum-distance batch-level variant that is compatible with common deep-learning packages. %
We also introduce a shrinkage version that improves stability in practice. Empirical results on real and synthetic graphs show that \systemname{} trains GNNs $4 \times$ faster on average (up to $13 \times$) than state-of-the-art systems, achieves better accuracy especially for deeper models, and sustains training at the trillion‑edge scale on commodity hardware. \systemname{} is model‑agnostic, supports full‑graph and mini‑batch training, and does not rely on high‑bandwidth interconnects or caching.
\end{abstract}

\maketitle

\pagestyle{\vldbpagestyle}
\begingroup\small\noindent\raggedright\textbf{PVLDB Reference Format:}\\
\vldbauthors. \vldbtitle. PVLDB, \vldbvolume(\vldbissue): \vldbpages, \vldbyear.\\
\href{https://doi.org/\vldbdoi}{doi:\vldbdoi}
\endgroup
\begingroup
\renewcommand\thefootnote{}\footnote{\noindent
This work is licensed under the Creative Commons BY-NC-ND 4.0 International License. Visit \url{https://creativecommons.org/licenses/by-nc-nd/4.0/} to view a copy of this license. For any use beyond those covered by this license, obtain permission by emailing \href{mailto:info@vldb.org}{info@vldb.org}. Copyright is held by the owner/author(s). Publication rights licensed to the VLDB Endowment. \\
\raggedright Proceedings of the VLDB Endowment, Vol. \vldbvolume, No. \vldbissue\ %
ISSN 2150-8097. \\
\href{https://doi.org/\vldbdoi}{doi:\vldbdoi} \\
}\addtocounter{footnote}{-1}\endgroup

\ifdefempty{\vldbavailabilityurl}{}{
\vspace{.3cm}
\begingroup\small\noindent\raggedright\textbf{PVLDB Artifact Availability:}\\
The source code, data, and/or other artifacts have been made available at \url{\vldbavailabilityurl}.
\endgroup
}

\section{Introduction}
\label{sec:intro}

Graph Neural Networks (GNNs) are a widespread class of machine learning models that learn representations from graph-structured data. GNNs power applications in science, social networks, commerce, and finance~\cite{drug-discovery,protein-folding,friends,community,recommendation,supply-chain,fraud-detection}. However, as graphs and models grow, distributed training encounters a network communication bottleneck, creating a need for efficient, scalable, and distributed GNN training. This paper addresses that need with a training framework that drastically reduces the communication bottleneck and is \emph{model-agnostic}, \emph{theory-backed}, and \emph{empirically validated}.

Distributed GNN training partitions the sparse graph and the dense per-node feature vectors to fit memory and exploit parallelism. Training is iterative: each iteration runs a $k$-layer forward pass that performs message passing, where each node aggregates and transforms neighbor features into hidden activations, followed by a backward pass that computes gradients, and finally a synchronized update aggregates them across replicas. Partitioning breaks edges, so when the forward pass follows a cross-partition edge, the trainer must fetch the remote endpoint's feature at the first layer and the neighbor's activation at deeper layers. As partition counts or model depth increase, these per-iteration remote fetches become the dominant cost, saturate the network, and throttle scalability~\cite{DBLP:conf/osdi/GandhiI21,DBLP:journals/pvldb/ZhangHL0HSGWZ020,bgl}. Mini-batch sampling~\cite{fastgcn,case-for-sampling} reduces computation, but k-hop neighborhoods still span partitions, so remote feature/activation reads remain; sampling reduces volume, not the need for cross-partition fetches. High-speed interconnects, such as RDMA or NVLink, can help; however, they are costly to deploy at scale and can introduce imbalance~\cite{DBLP:conf/usenix/SunSSSWWZLYZW23,dgcl,DBLP:conf/nsdi/LiuC00ZHPCCG23}. Caching reduces traffic by reusing fetched data, but it increases memory pressure and complexity~\cite{DBLP:conf/osdi/GandhiI21,DBLP:journals/pvldb/ZhangHL0HSGWZ020,ugache}. Therefore, current strategies to address this performance bottleneck remain inadequate.

This paper introduces \systemname{}\footnote{\textbf{Grap}h \textbf{pa}rtitioning}, a model-agnostic distributed framework for scalable GNN training on large graphs. \systemname{} \emph{eliminates cross-partition neighbor traffic during iterations}: partitions train in isolation and exchange only gradients on the forward/backward critical path (\emph{gradient-only communication}). Feature movement is deferred to periodic repartitioning boundaries, where it can be amortized. This approach reduces communication and synchronization overhead without specialized interconnects or caching-induced memory pressure. Moreover, partition independence within an iteration allows phase-parallel execution, which trades time for capacity by training a subset of partitions per phase.

While this strategy minimizes communication costs, it introduces a new challenge: a potential decrease in model accuracy due to restricted sample diversity, as each partition only sees its local information. To recover accuracy, \systemname{} combines (1) \emph{dynamic repartitioning} across groups of epochs (super-epochs) to expose new neighborhoods and (2) \emph{coverage-corrected gradient aggregation}, a batch-level importance weighting that compensates for isolation-induced sampling bias. Intuitively, repartitioning is more cost-effective than on-the-fly neighbor fetches because the input graph is typically far smaller than the volume of per-iteration intermediate activations that would otherwise be exchanged.

\textbf{Formal guarantees.} We derive a batch-level corrected gradient estimator and prove it is \emph{asymptotically unbiased} under standard support and boundedness conditions. We also show that, among all ways to rescale a batch with a single number, our choice makes the corrected batch gradient as close as possible (on average) to the ideal importance-weighted gradient (i.e., it minimizes average squared error).%

We have implemented \systemname{} and evaluated its performance in our cluster. Across real-world and RMAT graph datasets, \systemname{} achieves $4\times$ average speedup (up to $13\times$) over state‑of‑the‑art systems, improves test accuracy for deeper models, scales near‑linearly to 64 partitions, and trains a trillion-edge graph in phase-parallel mode on a single commodity server.

\textbf{Terminology.} We use \emph{gradient-only communication} to mean that there is no cross-partition exchange of features or activations during forward or backward passes; only gradients are synchronized across participating replicas.

The paper makes the following contributions:
\begin{itemize}
 \item \emph{Gradient-only communication:} a data-parallel scheme where iterations involve no cross-partition feature/activation traffic; only gradients are synchronized. Halo features move at repartitioning boundaries.
 \item \emph{Theory-backed accuracy recovery:} dynamic repartitioning plus batch-level coverage-corrected aggregation. We prove an asymptotically unbiased node-level estimator and show that the batch-level variant minimizes the mean squared deviation from the ideal.
 \item \emph{Flexible execution:} phase-parallel execution mode that trades time for capacity, allowing single-machine training at the trillion-edge scale.
 \item \emph{Implementation and evaluation:} \systemname{} supports full-graph and mini-batch training and shows superior performance, accuracy, and scalability over current systems.
\end{itemize}

\section{Background \& Motivation}
\label{sec:background}

\paragraph{GNNs 101}
A Graph Neural Network (GNN) takes a sparse graph (vertices, edges) with dense node features and learns node (and edge/graph) representations that encode both structure and features for downstream tasks. A \(k\)-layer forward pass performs message passing, so each node aggregates and transforms information from its neighbors and reaches \(k\) hops. Two execution modes are common: \emph{full-graph} (all nodes and edges participate each iteration) and \emph{mini-batch sampling} (sample target nodes and their multi-hop neighborhoods). Training is \emph{iterative}: every iteration runs a forward pass, computes gradients via backpropagation, and updates parameters with gradient descent.

\paragraph{Requirements and Goals}
Efficient training on large graphs requires a \emph{distributed} system that provides sufficient \emph{capacity} (graph, features, and intermediate states in aggregate memory) and that \emph{parallelizes} computation across partitions to reduce training time. Given the compute intensity of deep models, \emph{accelerators} (GPUs) are essential. Our goal is a \emph{model-agnostic} system that is not tied to a specific GNN architecture, and that targets \emph{ever-larger} graph datasets. We summarize these requirements to motivate why we target gradient-only communication rather than more aggressive partitioning or caching.

\paragraph{Capacity Limits}
Pure GPU full-graph training is often infeasible at scale due to limited High-Bandwidth Memory (HBM). For example, an H100 has 80\,GB, and even an 8-GPU node offers $<$640\,GB for \emph{all} topology, features, parameters, and activations~\cite{DBLP:conf/sigmod/WangZWCZY22,DBLP:conf/mlsys/WanLLKL22}. Several common datasets exceed this budget~\cite{igb-260m,rmat}. Mini-batch sampling can offload graphs to CPU memory, but this merely shifts the capacity bottleneck to DRAM and I/O contention.

\begin{figure}[t]
 \raggedleft
 \includegraphics[width=0.47\textwidth]{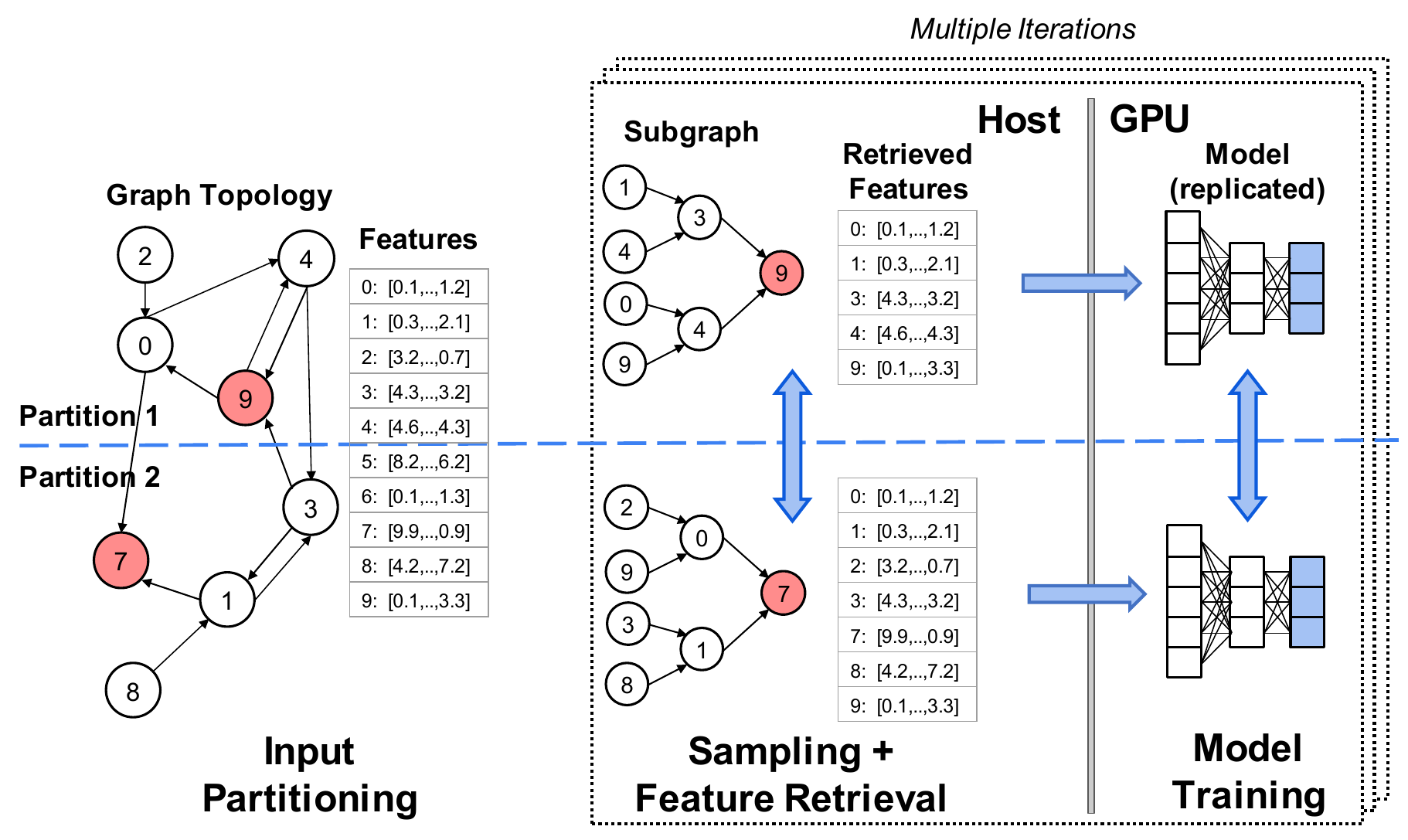}
 \caption{Sampling-based data-parallel GNN training. Each iteration samples multi-hop neighborhoods \emph{across partitions} and aggregates gradients. Following cross-partition edges triggers remote feature/activation fetches, which dominate per‑iteration cost at scale.}
 \label{fig:training_pipeline}
\end{figure}

\paragraph{Graph Partitioning}
Training a GNN on a large graph requires splitting the topology and features into partitions that are processed individually, as shown in Figure~\ref{fig:training_pipeline}. This data-parallel training approach replicates the model and aggregates gradients. However, when the graph is partitioned, message passing must follow edges that cross partitions, forcing remote feature/hidden-state fetches during each iteration. Cross-partition neighbor traffic (features/activations) grows with graph size, number of partitions, and depth, and quickly becomes a significant bottleneck~\cite{DBLP:conf/osdi/GandhiI21,DBLP:journals/pvldb/ZhangHL0HSGWZ020,bgl}. Table~\ref{tab:overhead} quantifies the magnitude for representative datasets; at 10~Gbps it implies 2--30 seconds of transfer per partition per epoch, and even at 400~Gbps the 37.7~GB case adds about 0.76~s per epoch, which is non-trivial relative to GPU compute. Moreover, computing partitions that both balance the load and reduce edge cuts is itself expensive and difficult~\cite{balanced-graph-partitioning}, and any imbalance further amplifies the effects of communication and stragglers.

\begin{table}[h]
 \centering
 \caption{Per-partition neighbor traffic during a single training epoch (3-layer model) versus number of partitions. Datasets are in Table~\ref{tab:datasets}. Inputs are $<\!1$\,GB.}
 \label{tab:overhead}
 \setlength{\tabcolsep}{0.5em}
 \small{
 \begin{tabular}{|l|l|r|r|r|}
 \hline
 \textbf{Partition} & \textbf{Dataset} & \textbf{2 parts.} & \textbf{4 parts.} & \textbf{8 parts.} \\
 \hline
 \multirow{2}{*}{\textbf{Random}} & Reddit & 6,106 MB & 9,216 MB & 10,832 MB \\
 \cline{2-5}
 & OGBN-Pr & 21,448 MB & 32,160 MB & 37,682 MB \\
 \hline
 \multirow{2}{*}{\textbf{METIS}} & Reddit & 3,264 MB & 5,180 MB & 5,512 MB \\
 \cline{2-5}
 & OGBN-Pr & 5,000 MB & 10,120 MB & 11,072 MB \\
 \hline
 \end{tabular}
 }
\end{table}

\paragraph{Why Fixes Fall Short}
Existing solutions help but do not eliminate remote fetches. High-quality partitions (e.g., METIS~\cite{metis}) reduce edge cuts but add preprocessing and memory overhead. Sampling lowers compute but not remote neighbor retrieval~\cite{fastgcn,case-for-sampling}. Fast interconnects (NVLink/RDMA)~\cite{DBLP:conf/usenix/SunSSSWWZLYZW23,dgcl,DBLP:conf/nsdi/LiuC00ZHPCCG23} help within a node, yet gains do not generalize: when traffic leaves NVLink islands, epoch times balloon ($>11\times$ on OGBN-Pa, $16\times$ on Reddit~\cite{mgg,DBLP:conf/nips/HamiltonYL17}). Uniform NVLink-class bandwidth across nodes demands specialized switching, which is costly. Caching~\cite{DBLP:conf/osdi/GandhiI21,DBLP:journals/pvldb/ZhangHL0HSGWZ020,ugache} reduces traffic but consumes HBM/DRAM. \ul{Thus, while existing approaches offer partial relief, the core tension between high capacity and cross-partition network traffic remains.}

Therefore, we move neighbor information exchange off the iteration critical path by training partitions in isolation and synchronizing only gradients; we then restore coverage statistically via repartitioning and correction.

\section{The \systemname{} Design}
\label{sec:design}

We begin with an overview of the system and its design principles (Section~\ref{subsec:design-overview}). Then, we introduce \systemname{}'s isolated training strategy (Section~\ref{subsec:design-sampling}), its approach to dynamic repartitioning (Section~\ref{subsec:design-repartitioning}), its sampling bias correction (Section~\ref{subsec:design-weighted-aggregation}), the training controller along with its heuristic for switching partitions (Section~\ref{subsec:design-controller}), and its phase-parallel training mode (Section~\ref{subsec:design-phase-training}).

\subsection{Overview}
\label{subsec:design-overview}

\systemname{} aims to efficiently train GNNs on large input graphs in a distributed, data-parallel fashion. \systemname{} supports both full-graph and sampling-based training modes (Section~\ref{sec:background}), although we focus mostly on sampling-based training as it offers superior scalability. \systemname{} divides the input graph and its features into partitions. Each partition is stored in CPU memory and loaded into GPU for training alongside the model. In the sampling-based training mode, we use the CPU to sample from the training nodes in the local partition to form mini-batches and load each batch into the GPU.

Unlike conventional GNN training, whether executing in full-graph or sampling mode, \systemname{} restricts graph access to the local partition only, preventing any data exchange along edges spanning multiple partitions, thereby executing training on each partition in complete isolation (Section~\ref{subsec:design-sampling}). This strategy \emph{eliminates cross-partition exchange of features or activations within an iteration}; cross-partition communication is \emph{gradient-only}, which yields significant performance gains.

\systemname{} supports all GNN models compatible with other state-of-the-art systems, maintaining similar or improving accuracy but with lower training time. These speedups are primarily due to reduced communication and synchronization overheads, as we will show in Section~\ref{sec:evaluation}. Moreover, due to its unique approach of processing partitions independently, \systemname{} achieves high flexibility in managing the overall training process. Since training on each partition can be performed separately from the others, there is no longer a need for concurrent data-parallel training of all partitions. Instead, it is enough for training on partitions to occur \emph{conceptually in parallel}. Within each phase, gradients are synchronized across the active replicas, and across phases, we carry optimizer state forward. As a result, depending on available resources, \systemname{} can train some partitions sequentially, seamlessly trading training time for capacity (Section~\ref{subsec:design-phase-training}).

\begin{figure}[htbp]
 \centering
 \begin{subfigure}[b]{0.47\textwidth}
 \includegraphics[width=1\textwidth]{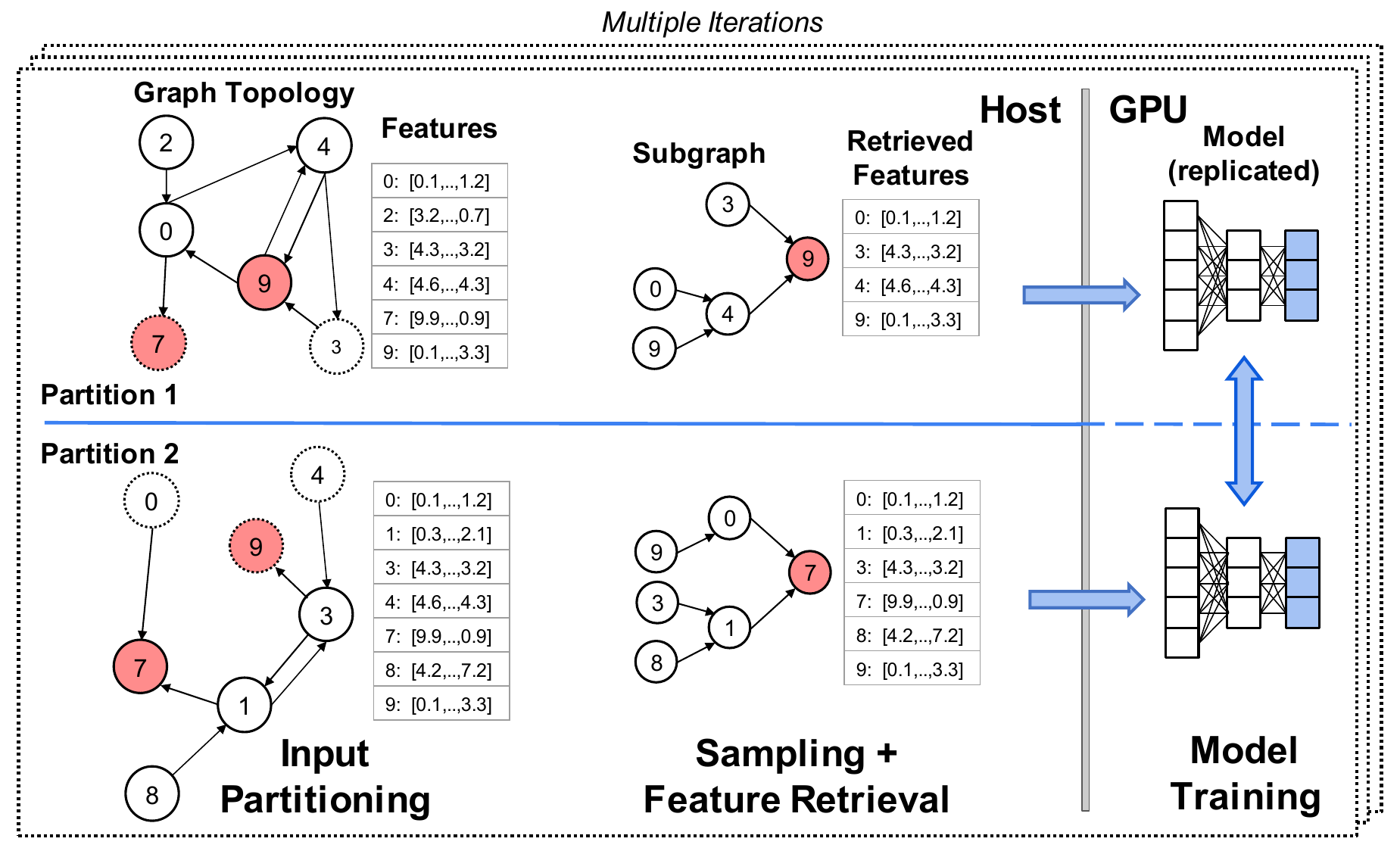}
 \caption{Super-epoch 1}
 \label{fig:sub1}
 \end{subfigure}
 \hfill
 \begin{subfigure}[b]{0.47\textwidth}
 \includegraphics[width=1\textwidth]{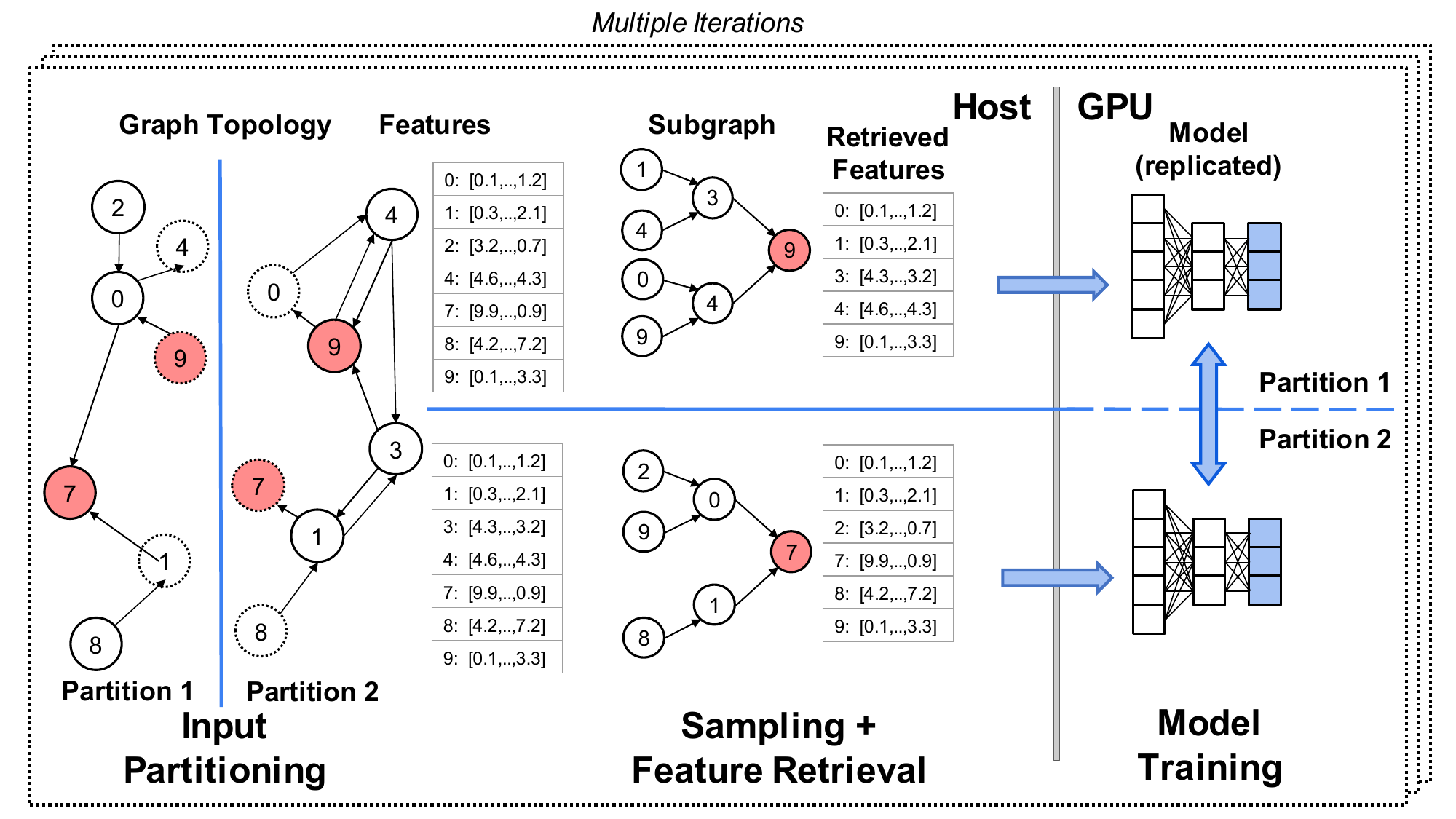}
 \caption{Super-epoch 2}
 \label{fig:sub2}
 \end{subfigure}
 \caption{Example of GNN training with \systemname{} for two super-epochs. Compared to Figure~\ref{fig:training_pipeline}, subgraph sampling is performed \emph{without remote neighbor fetches} (gradient-only communication) in each iteration and epoch. After a few epochs, we switch to a new super-epoch, creating a new partitioning layout (repartitioning).}
 \label{fig:training-example}
\end{figure}

Figure~\ref{fig:training-example} illustrates the workflow of \systemname{} as it trains a two-layer GNN model on an example input graph in sampling mode. Training is organized in super-epochs, each consisting of multiple training epochs. The input graph and its features are partitioned at the start of each super-epoch. We keep the partitioning layout unchanged for the entire super-epoch. Training proceeds in epochs that iterate over the whole set of training nodes in steps (also referred to as iterations). Within each step, we construct mini-batches of subgraphs and associated features. The GPU then processes each mini-batch as input for model training. Finally, the gradients are accumulated across all replicated models for each partition before proceeding to the next iteration.

\systemname{} allows each node to be exposed to its entire neighborhood despite its partition-level isolation by strategically repartitioning the input between super-epochs (Section~\ref{subsec:design-repartitioning}). Repartitioning the input several times throughout the training to disseminate neighborhood information is more cost-effective than exchanging data across partitions. The rationale behind this design decision is that, with conventional training, the total neighbor‑exchange volume scales roughly with the sum of the fanout at each layer and grows rapidly with model depth and cut edges. In contrast, repartitioning cost is linear in the input size and the number of super‑epochs.

Repartitioning alone is insufficient to ensure high model accuracy as, despite exposing each vertex to its entire neighborhood throughout all super-epochs, we may still introduce bias into the sampling process unwillingly. For example, during two super-epochs, a neighbor could be seen more or less frequently due to the partition structure and the presence or absence of an edge. \systemname{} accounts for sample bias by keeping track of how many times a vertex is seen during a super-epoch and scaling the gradients for that partition before aggregation. We introduce a low-overhead weighted aggregation scheme based on estimations to adjust for sampling bias (Section~\ref{subsec:design-weighted-aggregation}). We directly use this scheme in the \systemname{} training controller to decide when to repartition the graph, switching to the next super-epoch (Section~\ref{subsec:design-controller}).

Intuitively, isolated training may act as a form of dropout in the input layer, which can improve generalization for deeper models. Repartitioning restores long‑run neighbor coverage, and the coverage‑corrected scaling re-centers gradients toward the full‑graph objective.

Together, these mechanisms preserve the convergence properties akin to conventional GNN training, enabling the model to effectively learn from the global graph structure while reducing communication overhead.

\subsection{Isolated Training Strategy}
\label{subsec:design-sampling}

Unlike conventional GNN training systems that follow edges to retrieve nodes and features from other partitions, \systemname{} operates entirely within each local partition and uses this isolation to streamline training.

\paragraph{Isolated Training} \systemname{} proceeds similarly to traditional training, but considers exclusively local nodes and edges. \emph{Halo nodes} are the key enabler: they cache boundary neighbors so that a partition contains all nodes within $k$ hops of any training node, allowing $k$-layer message passing without remote fetches. Halos are read-only within a super-epoch and contribute to local aggregations but never trigger cross-partition communication. With halos in place, \systemname{} can proceed as if each partition executes message passing on its local subgraph, without remote dependencies within an iteration. In each iteration, we either feed the entire partition to the model (full-graph mode) or select a subset of the training nodes from the local partition to construct subgraphs (sampling mode). Each subgraph's depth aligns with the number of layers in the model; for a $k$-layer model we sample $k$-hop subgraphs, and halo nodes provide $k$-hop closure within each partition so each mini-batch is self-contained without cross-partition reads.

\paragraph{Overhead Reduction} The primary advantage of \systemname{}'s approach is reducing communication overhead and synchronization delays. In traditional GNN training, resource consumption during sampling grows rapidly with graph size, model depth, and partition count. By eliminating cross-partition exchange of features or activations during a training iteration, \systemname{} reduces the time spent on \emph{cross-partition neighbor exchange} to zero, leaving only gradient synchronization. The system still accumulates and updates gradients across partitions after each iteration to ensure the model parameters benefit from each partition. This approach enables both an accelerated training process and enhanced scalability. Isolation introduces a predictable \emph{coverage bias}: a node’s gradient in a step reflects only neighbors present in the current partition. We correct this bias in aggregation using a single scalar per batch, derived from easy-to-measure degree ratios (Section~\ref{subsec:design-weighted-aggregation}).

\subsection{Dynamic Graph Repartitioning}
\label{subsec:design-repartitioning}

\systemname{} dynamically manages graph partitions throughout training by repartitioning the graph to expose nodes to new neighborhoods over time. This differs from other GNN training systems that use static partitioning without isolation. Adapting the partitioning layout improves model quality.

\textbf{Terminology.} We use \emph{iteration} to mean one mini-batch step; an \emph{epoch} is a full pass over training nodes under a fixed partition layout; a \emph{super-epoch} is a contiguous group of epochs between repartitioning events. A \emph{chunk} is a static unit of graph data produced once at startup. A \emph{partition} is the training unit formed by combining a base chunk with a swept chunk for a super-epoch. A \emph{worker} is an execution resource (GPU) responsible for one base chunk that trains one partition at a time. A \emph{phase} is a scheduling unit when only a subset of partitions run concurrently (Section~\ref{subsec:design-phase-training}).

\paragraph{Super-epochs}

\systemname{} structures the overall training process into super-epochs. A super-epoch is simply a contiguous span of epochs between repartitionings and is independent of the number of workers. The training controller (Section~\ref{subsec:design-controller}) makes the decision to transition to a new super-epoch where the input graph will undergo repartitioning. The new partitions must be distinct from the partitions in the previous super-epoch to ensure that each vertex is exposed to its entire neighborhood. Once the repartition is complete, we begin executing the next super-epoch. The challenge with repartitioning is that recomputing partitions and shuffling data between each super-epoch may be prohibitively expensive and negate the performance benefits of isolated training.

\paragraph{Graph Partitioning}

Many graph partitioning strategies exist, such as random or METIS~\cite{metis}. In random graph partitioning, nodes in the input graph are randomly assigned to different partitions without particular consideration for the underlying graph structure. In contrast, METIS graph partitioning recursively coarsens the graph to reduce its size, partitions the smaller graph, and then uncoarsens it to obtain a balanced and high-quality partition of the original graph. Upon partitioning the nodes, the edges linked to these nodes and their corresponding endpoints are also included in the respective partition. Should an endpoint not be part of the initial partition, it is incorporated as a halo node, represented with dashed lines in Figure~\ref{fig:training-example}. Finally, the features associated with each node are added to the partition.

\paragraph{Chunks and Efficient Partition Construction}

Repartitioning the graph introduces overhead: nodes and edges must be assigned to partitions, and feature data (typically vectors of length 128 to 1024) must be split. To reduce this overhead, \systemname{} uses unmodified graph partitioning strategies such as random or METIS to split the dataset into \emph{chunks}, rather than final partitions. This initial partitioning is performed only once. Each partition for training is dynamically constructed by combining pairs of chunks, specifically a base chunk, $m_i$, with another variable chunk, $m_j$ where $i \neq j$. This approach allows for quick and efficient repartitioning of the graph by sweeping through different variable chunks without recomputing a new set of partitions after each super-epoch. Each worker is responsible one chunk and sweeps one chunk from the other chunks during each repartitioning, thus only loading a small subset of the graph each time. This method effectively disseminates neighborhood information across the graph structure. Figure~\ref{fig:sweep_chunk} shows a simple example of \systemname{}'s efficient sweeping chunk partitioning. This coverage holds for $k$-layer models as the locality boundary changes across super-epochs, ensuring that neighbors within graph distance $\le k$ appear with non-zero probability over time.

\begin{figure}
 \raggedleft
 \includegraphics[width=0.47\textwidth]{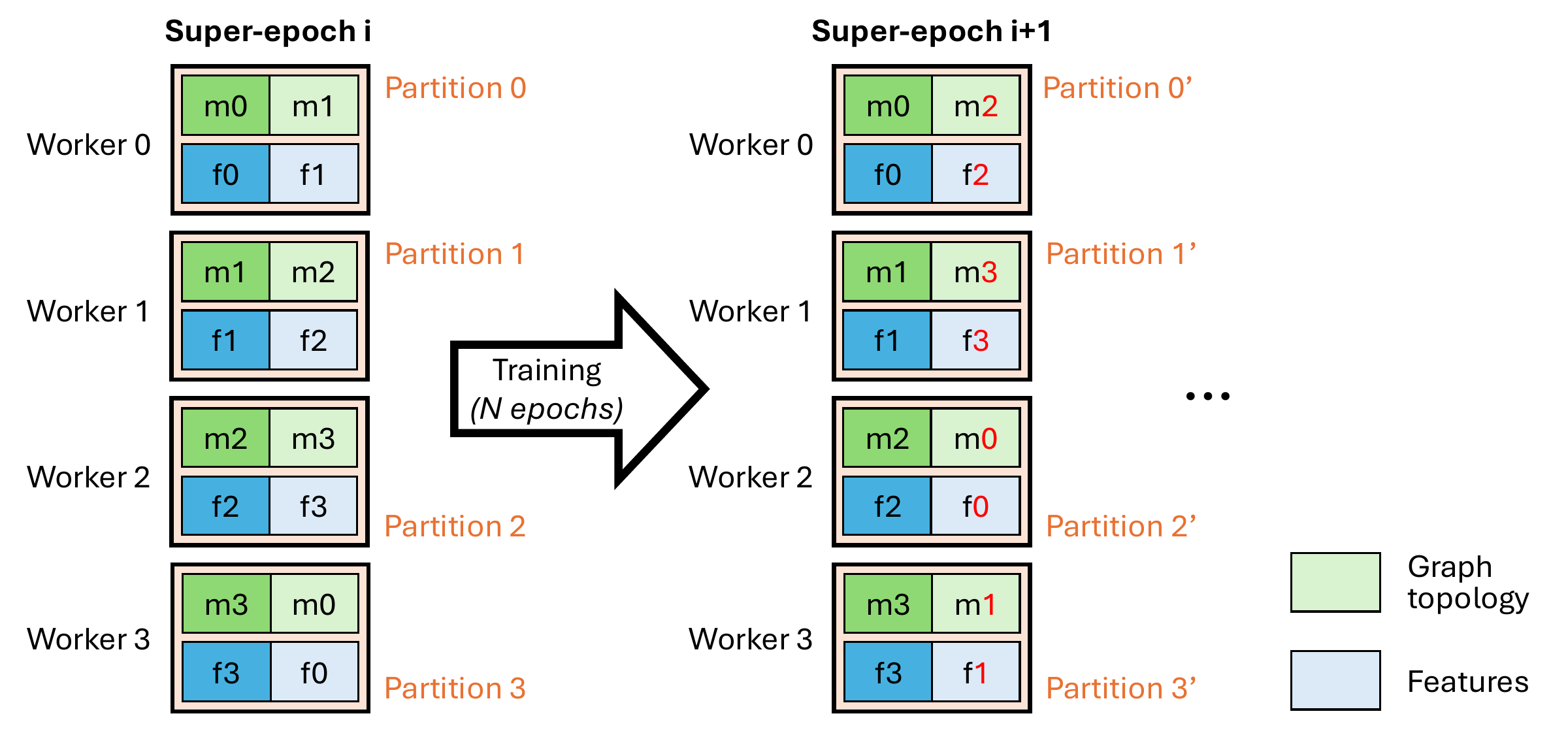}
 \caption{Example of efficient dynamic repartitioning via sweeping chunks. The example shows four workers operating in a data-parallel manner. The graph topology and feature data are divided into chunks {$m_0$, $m_1$, $m_2$, $m_3$} and {$f_0$, $f_1$, $f_2$, $f_3$} respectively. The dataset is split into chunks just once before training. The four workers then operate in parallel. Worker 0 initially loads chunks $m_0$ and $f_0$, then loads $m_1$ and $f_1$, groups these together as a partition, and trains the model on the resulting partition for a super-epoch. After that, we perform a repartition where worker 0 will load the next chunks $m_2$ and $f_2$, grouping them with $m_0$, $f_0$ to form a new partition for further training. This process repeats until all workers have seen all chunks at least once.}
 \label{fig:sweep_chunk}
\end{figure}

\paragraph{Coverage guarantee.} With $M+1$ chunks and $W$ workers, each worker pins a base chunk and cycles through others across super-epochs. Over $M$ super-epochs every cross-chunk edge co-resides in at least one partition, satisfying Theorem~\ref{thm:unbiased}'s support requirement. When $W < M$, base chunks rotate in round-robin so all pairs are covered, making inclusion frequencies well-defined for bias correction.

\subsection{Coverage-Corrected Aggregation}
\label{subsec:design-weighted-aggregation}

\paragraph{Problem: Local Sampling Bias} When we train in isolation, a node $v$ only sees its \emph{local} neighbors in the current partition. The gradient we compute in that step is, therefore, biased toward the local neighborhood. This can be seen as a form of biased sampling and our goal is to correct for it so that model updates reflect the full graph. Our procedure is inspired by FastGCN~\cite{fastgcn} but differs in its approach. FastGCN corrects sampling at a single layer, while we correct the full gradient after all layers, so the correction is independent of depth. In addition to directly addressing the quantity of interest in training, our approach has the advantage of being model-agnostic, as it does not require modifying activations or other model components but only operates on gradients.

\paragraph{Solution: Bias-Corrected Estimators} We propose a set of estimators with different properties that allow tailoring the solution to different use cases. We first propose a node-level estimator that eliminates sampling bias exactly. We then propose a coarser estimator that operates on batch-level gradient objects, enabling compatibility with common software packages such as PyTorch~\cite{pytorch}. Finally, we propose a shrinkage-based estimator that consciously accepts some bias to reduce overfitting.

\paragraph{Notation} Consider a model represented by a function $g_v(u,\theta)$, parametrized by a parameter vector $\theta\in\mathbb{R}^{|\theta|}$, that gives the loss contribution for a node $v$ of a single neighboring node $u$. We omit $\theta$ for readability. Let $\mathcal{D}$ be the distribution of nodes in the original graph and $\mathcal{D}_v$ the conditional distribution of neighbors of $v$ in the original graph.

\paragraph{Full-Graph Unbiased Estimator} The true gradient of the loss function $L$ is given by:
\begin{equation}\label{eq:gradient}
\nabla_\theta L = \mathbb{E}_{v\sim \mathcal{D}}
\left[
 \mathbb{E}_{u\sim \mathcal{D}_v}
 \left[
 \nabla_\theta g_v
 \left(
 u
 \right)
 \right]
\right]
\end{equation}
where the expectation over $\mathcal{D}_v$ models neighborhood aggregation. This can be estimated by the unbiased estimator:
\begin{equation}
 \nabla_\theta \tilde{L} = \frac{1}{|S_\mathcal{D}|} \sum_{v\in S_\mathcal{D}} \frac{1}{|S_{\mathcal{D}_v}|} \sum_{u\in S_{\mathcal{D}_v}} \nabla_\theta g_v(u)
\end{equation}
where the samples $S_\mathcal{D}$ and $S_{\mathcal{D}_v}$ are i.i.d.\ draws according to the original distributions $\mathcal{D}$ and $\mathcal{D}_v$.

\paragraph{Sampling Bias in Isolated Training} Viewed across partitions, nodes are still sampled from the full distribution $\mathcal{D}$ in the isolated training strategy. Neighboring nodes, however, are drawn from the conditional distribution $\mathcal{D}_v^\text{local}$ of neighboring nodes of $v$ present in the local partition.

\paragraph{Importance Weighting} We want to re-weight gradient contributions to recover the true gradient $\nabla_\theta L$ \eqref{eq:gradient}. Our guiding idea will be a well-known importance sampling result~\cite{glynn1989importance} that allows recovering the original distribution. Denoting the probabilities of selecting a neighboring node $u$ under the original distribution $\mathcal{D}_v$ and under the local distribution $\mathcal{D}_v^\text{local}$ by $p_v(u)$ and $q_v(u)$, respectively, we have:
\begin{equation} \label{eq:importance_sampling}
\begin{split}
& \mathbb{E}_{v\sim\mathcal{D}} \left[ \mathbb{E}_{u\sim\mathcal{D}^\text{local}_v} \left[ \frac{p_v(u)}{q_v(u)} \nabla_\theta g_v(u) \right] \right] = \\
& \mathbb{E}_{v\sim\mathcal{D}} \left[ \mathbb{E}_{u\sim\mathcal{D}_v} \left[ \nabla_\theta g_v(u) \right] \right] =
 \nabla_\theta L
\end{split}
\end{equation}
if $q_v(u) > 0$ whenever $p_v(u) > 0$.

\paragraph{Asymptotically Unbiased Node-Level Estimator} We cannot apply the importance sampling result \eqref{eq:importance_sampling} directly because the probability of selecting a neighbor missing from the local partition is zero, violating the support condition. We resolve this via repartitioning: as long as the long-run probability is positive, the bias can be removed entirely, at least asymptotically. The sweep schedule presented in Section~\ref{subsec:design-repartitioning} guarantees that each node will be matched with all of its neighbors already after a finite number of super-epochs, so the assumption is justified. Formally, let $q_v^t(u)=\mathbb{P}\left[u\in S_{\mathcal{D}^\text{local}_v} \right]$ be the probability of selecting neighbor $u$ of $v$ in the current local partition at super-epoch $t$. We define the estimator:
\begin{equation}
\begin{split}
& \nabla_\theta \tilde{L}^{\text{corr}} = \\
& \frac{1}{T}\sum_{t=1}^T\frac{1}{|S_\mathcal{D}^t|} \sum_{v\in S_\mathcal{D}^t} \frac{1}{|S_{\mathcal{D}^\text{local}_v}|} \sum_{u\in S_{\mathcal{D}^\text{local}_v}} \frac{p_v(u)}{q_v^t(u)} \nabla_\theta g_v(u)
\end{split}
\end{equation}
where $T$ is the number of super-epochs and $|S_\mathcal{D}^t|$ is the local node sample in the current super-epoch. If the partitioning gives enough weight to each node such that:
\begin{equation}\label{eq:convergence_q}
\lim_{T\rightarrow\infty} \frac{1}{T} \sum_{t=1}^T \mathbb{I}_{u\in S_{\mathcal{D}_v^\text{local}}} \xrightarrow{\text{a.s.}}q_v(u) >0
\end{equation}
i.e., the long-run selection probability is positive, we have:
\begin{theorem}\label{thm:unbiased}
Under assumption~\eqref{eq:convergence_q} and if $\left| \frac{p_v(u)}{q_v^t(u)} \nabla_\theta g_v(u)\right|\le M$ for some constant $M$ and $S_{\mathcal{D}}^t$ and $S_{\mathcal{D}^\text{local}_v}$ are i.i.d.\ samples,
\begin{equation}
\lim_{T\rightarrow\infty} \mathbb{E}\left[ \nabla_\theta \tilde{L}^{\text{corr}} \right] = \nabla_\theta L
\end{equation}
\end{theorem}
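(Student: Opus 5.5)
The plan is to condition on the realized partitioning at each super-epoch, reduce each super-epoch's inner sum to an importance-sampling identity of the form~\eqref{eq:importance_sampling} with $q_v^t$ in place of $q_v$, and then average over $t$. The obstruction at a fixed $t$ is that $q_v^t(u)$ can be zero, so the support condition fails. Assumption~\eqref{eq:convergence_q} is what lets us remove that defect in the Cesàro limit, and dominated convergence then moves the limit through the expectation.

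\textbf{Step 1 (per-super-epoch conditional mean).} Condition on the partition layout at super-epoch $t$. Because $S_{\mathcal{D}}^t$ and $S_{\mathcal{D}_v^\text{local}}$ are i.i.d.\ samples, linearity gives a conditional expectation of the $t$-th summand equal to
\begin{equation*}
G_t=\mathbb{E}_{v\sim\mathcal{D}}\Bigl[\sum_{u:\,q_v^t(u)>0} p_v(u)\,\nabla_\theta g_v(u)\Bigr].
\end{equation*}
This uses the fact that a draw from $\mathcal{D}_v^\text{local}$ hits $u$ with probability $q_v^t(u)$, which cancels the denominator of the weight. Consequently $\nabla_\theta L - G_t=\mathbb{E}_v\bigl[\sum_u p_v(u)\nabla_\theta g_v(u)\,\mathbb{I}_{q_v^t(u)=0}\bigr]$, and the whole problem reduces to showing that the Cesàro average of these missing-mass terms vanishes.

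\textbf{Step 2 (vanishing missing mass).} Write $\mathbb{I}_{q_v^t(u)=0}=1-\mathbb{I}_{u\text{ available at }t}$. Assumption~\eqref{eq:convergence_q} says that the long-run frequency of $u$ being selected converges almost surely to $q_v(u)>0$, so $u$ is available at infinitely many $t$. This is the main obstacle: positivity of the long-run frequency alone only shows that the missing-mass term is not \emph{always} present. It does not by itself show that the average error $\frac1T\sum_t(\nabla_\theta L-G_t)$ tends to zero.

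I would first try to read the estimator's normalization as the paper intends. The rescaling $p_v/q_v^t$ is defined relative to the cumulative, long-run availability: $q_v^t$ is interpreted as the running empirical frequency $\frac1t\sum_{s\le t}\mathbb{I}_{u\in S^s}$, which converges a.s.\ to $q_v(u)$ by~\eqref{eq:convergence_q}. Under this reading, averaging the indicator-weighted terms over $t$ gives
\begin{equation*}
\frac1T\sum_t \mathbb{I}_{u\in S^t}\,\frac{p_v(u)}{q_v^t(u)}\nabla_\theta g_v(u)\;\to\; q_v(u)\cdot\frac{p_v(u)}{q_v(u)}\nabla_\theta g_v(u)=p_v(u)\nabla_\theta g_v(u)
\end{equation*}
almost surely. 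To justify this I would use a Toeplitz/Cesàro lemma with the ratio $q_v^t\to q_v$. Summing over $u$ and taking the expectation over $v$ then recovers $\nabla_\theta L$ pointwise almost surely. If neighborhoods are infinite, I would truncate to a finite set of $u$ carrying most of the mass and bound the tail using $M$.

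\textbf{Step 3 (interchange).} The bound $\bigl|\frac{p_v(u)}{q_v^t(u)}\nabla_\theta g_v(u)\bigr|\le M$ makes every summand, and hence $\nabla_\theta\tilde L^{\text{corr}}$, bounded by $M$ uniformly in $T$. Combined with the almost sure convergence from Step 2, dominated convergence yields $\lim_{T\to\infty}\mathbb{E}[\nabla_\theta\tilde L^{\text{corr}}]=\nabla_\theta L$. The inner i.i.d.\ sampling averages do not affect this conclusion, since the tower property can be applied before the limit is taken.
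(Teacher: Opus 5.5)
Your Steps 2--3 follow the paper's route. The paper writes $\mathbb{E}[\nabla_\theta \tilde{L}^{\text{corr}}]$ as a Ces\`aro average of per-super-epoch importance-weighted expectations. It then uses dominated convergence with assumption~\eqref{eq:convergence_q} to replace $q_v^t$ by $q_v$, and finishes with the importance-sampling identity~\eqref{eq:importance_sampling}. Your Toeplitz step $q_v(u)\cdot p_v(u)/q_v(u)=p_v(u)$ is that identity done inline. What you add is the observation that the dominated-convergence step silently needs $q_v^t(u)\to q_v(u)$. Your Step~1 shows this is more than an obstacle. Read $q_v^t$ literally as the per-super-epoch local probability, as in the paper's uniform case $q_v^t(u)=1/d_v^{\text{local}}$. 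Then the Ces\`aro limit is $\mathbb{E}_{v\sim\mathcal{D}}\bigl[\sum_u \bar a_v(u)\,p_v(u)\nabla_\theta g_v(u)\bigr]$, where $\bar a_v(u)$ is the long-run fraction of super-epochs in which $u$ is local. This differs from $\nabla_\theta L$ as soon as some neighbor is missing a positive fraction of the time. For instance, take two neighbors with $p_v=\tfrac12$ each, $u_1$ always local, $u_2$ local only in odd super-epochs, and one uniform draw per step. All hypotheses hold: the long-run frequencies are $\tfrac34,\tfrac14$ and the weights are at most $1$. Yet the limit is $\tfrac12\nabla_\theta g_v(u_1)+\tfrac14\nabla_\theta g_v(u_2)$. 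So the statement is false under the literal definition, and the hole you flagged is in the paper's argument, not one you introduced.

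Consequently, your running-frequency reading is not what the paper intends. It is incompatible with the weights $d_v^{\text{local}}/d_v^{\text{global}}$ and the factors $c^t$ used afterwards. It defines a different estimator, and you should state it as an explicit modification rather than an interpretation. For that estimator your order of operations (pathwise Ces\`aro limit first, dominated convergence second) is the correct one. The reason is that $q_v^t$ is now random and depends on the current sample, so it cannot be cancelled inside a per-$t$ expectation as in the paper's first display. Two loose ends remain.

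\emph{Normalization.} The empirical frequency must count draws rather than membership $u\in S_{\mathcal{D}^{\text{local}}_v}$. Otherwise the $1/|S_{\mathcal{D}^{\text{local}}_v}|$ normalization gives the wrong constant when several neighbors are drawn.

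\emph{The bound $M$.} Under your reading the uniform bound becomes a strong hypothesis, since at the first appearance of $u$ the running frequency can be of order $1/t$.

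A cleaner repair keeps the literal $q_v^t$ in~\eqref{eq:convergence_q} but weights by the long-run $q_v(u)$. Your Step~1 computation (counting draws) then gives the per-super-epoch mean $\mathbb{E}_{v\sim\mathcal{D}}\bigl[\sum_u q_v^t(u)\,\tfrac{p_v(u)}{q_v(u)}\nabla_\theta g_v(u)\bigr]$. Bounded convergence applied to~\eqref{eq:convergence_q} gives $\frac{1}{T}\sum_{t=1}^T q_v^t(u)\to q_v(u)$, and the limit is exactly $\nabla_\theta L$. This is the estimator for which the paper's argument goes through essentially verbatim.
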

\begin{proof}
By i.i.d.\ sampling, we have:
\begin{equation}
\mathbb{E}\left[ \nabla_\theta \tilde{L}^{\text{corr}} \right] = \frac{1}{T} \sum_{t=1}^T \mathbb{E}_{v\sim\mathcal{D}} \left[ \mathbb{E}_{u\sim\mathcal{D}^\text{local}_v} \left[ \frac{p_v(u)}{q^t_v(u)} \nabla_\theta g_v(u) \right] \right]
\end{equation}

By the dominated convergence theorem and assumption~\eqref{eq:convergence_q}, we have:
\begin{equation}
\begin{split}
& \lim_{T\rightarrow\infty} \frac{1}{T} \sum_{t=1}^T \mathbb{E}_{v\sim\mathcal{D}} \left[ \mathbb{E}_{u\sim\mathcal{D}^\text{local}_v} \left[ \frac{p_v(u)}{q^t_v(u)} \nabla_\theta g_v(u) \right] \right] = \\
& \mathbb{E}_{v\sim\mathcal{D}} \left[ \mathbb{E}_{u\sim\mathcal{D}^\text{local}_v} \left[ \frac{p_v(u)}{q_v(u)} \nabla_\theta g_v(u) \right] \right]
\end{split}
\end{equation}

The result follows by applying the importance sampling theorem~\eqref{eq:importance_sampling}.
\end{proof}
That is, $\nabla_\theta \tilde{L}^{\text{corr}}$ is asymptotically unbiased under mild regularity conditions. An important special case is when all neighbors have equal weight under $\mathcal{D}_v$ for all $v$. Then $p_v(u) = 1/d_v^\text{global}$, $q^t_v(u) = 1 / d_v^\text{local}$ for all local neighbors $u$ of $v$, where $d_v^\text{global}$ and $d_v^\text{local}$ are the degrees of node $v$ in the full graph and in the local partition, respectively. So $\nabla_\theta \tilde{L}^{\text{corr}}$ becomes:
\begin{equation}
\begin{split}
& \nabla_\theta \tilde{L}^{\text{uniform}} = \\
& \frac{1}{T}\sum_{t=1}^T\frac{1}{|S_\mathcal{D}^t|} \sum_{v\in S_\mathcal{D}^t} \frac{1}{|S_{\mathcal{D}^\text{local}_v}|} \frac{d_v^{\text{local}}}{d_v^\text{global}} \sum_{u\in S_{\mathcal{D}^\text{local}_v}} \nabla_\theta g_v(u)
\end{split}
\end{equation}
In general one may use $q^t_v(u) = p_v(u)/\sum_{i\in \mathcal{D}^{\text{local}}_v} p_v(i)$.
\paragraph{Batch-Level Approximation} Exact bias correction requires modifying gradient contributions at the level of individual nodes in a batch. Common libraries such as PyTorch \cite{pytorch} only provide hooks for modifying batch-level gradient objects, however. In line with the idea of being model-agnostic, it would be desirable to have a framework that can be used with PyTorch~\cite{pytorch} and its batch-level gradient objects. We propose the following batch-level estimator and will proceed to show that it is closest (in L2-distance of expectations) to the asymptotically unbiased estimator under mild assumptions:
\begin{equation}\label{eq:batch-estimator}
\begin{split}
& \nabla_\theta \tilde{L}^{\text{batch}}(c) = \\
& \frac{1}{T}\sum_{t=1}^T\frac{c^t}{|S_\mathcal{D}^t|} \sum_{v\in S_\mathcal{D}^t} \frac{1}{|S_{\mathcal{D}^\text{local}_v}|} \sum_{u\in S_{\mathcal{D}^\text{local}_v}} \nabla_\theta g_v(u)
\end{split}
\end{equation}
where the batch-level correction factors $c\in\mathbb{R}^T$ are:
\begin{equation}\label{eq:batch_correction_factor}
c^t = \frac{1}{|S_\mathcal{D}^t|} \sum_{v\in S_\mathcal{D}^t} \frac{1}{|S_{\mathcal{D}^\text{local}_v}|} \sum_{u\in S_{\mathcal{D}^\text{local}_v}} \frac{p_v(u)}{q_v^t(u)}
\end{equation}
In the case of equal selection probabilities, this simplifies to:
\begin{equation}\label{eq:correction_uniform}
c^t_\text{uniform} = \frac{1}{|S^t_{\mathcal{D}}|} \sum_{v\in S^t_{\mathcal{D}}} \frac{d_v^{\text{local}}}{d_v^\text{global}}
\end{equation}
In practice, we apply the batch-level correction $\nabla_\theta \tilde{L}^{\text{batch}}(c)$ \emph{before} the phase's gradient synchronization. We have (for notational convenience, we set $T=1$ without loss of generality and drop the $t$-superscript):
\begin{theorem}\label{thm:minbiased}
Assuming that gradients are independently identically distributed $\nabla_\theta g_v(u)\sim \text{i.i.d}$ with finite mean $\mu\in \mathbb{R}^{|\theta|}\setminus\mathbf{0}$ independent of $\frac{p_v(u)}{q_v(u)}$, equation~\eqref{eq:batch_correction_factor} gives the scalar $c$ that minimizes:
\begin{equation}\label{eq:biasobjective-appendix}
\left\Vert \mathbb{E} \left[ \nabla_\theta \tilde{L}^{\text{corr}} \right] - c \mathbb{E} \left[ \nabla_\theta g_{S_\mathcal{D}} \right] \right\Vert_2
\end{equation}
\end{theorem}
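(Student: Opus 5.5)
With $T=1$, the plan is to reduce the objective in Theorem~\ref{thm:minbiased} to a one-dimensional least-squares problem. Write $w_v(u)=p_v(u)/q_v(u)$ and $\mu=\mathbb{E}[\nabla_\theta g_v(u)]$. By assumption the gradients are i.i.d.\ with mean $\mu$ and independent of the weights, so the expectation of each summand of $\nabla_\theta\tilde L^{\text{corr}}$ factorizes:
\[
\mathbb{E}\bigl[w_v(u)\nabla_\theta g_v(u)\bigr]=\mathbb{E}[w_v(u)]\,\mu .
\]
Averaging over the i.i.d.\ samples $S_\mathcal{D}$ and $S_{\mathcal{D}^\text{local}_v}$ (linearity of expectation, as in the first step of the proof of Theorem~\ref{thm:unbiased}) then gives $\mathbb{E}[\nabla_\theta\tilde L^{\text{corr}}]=\bar w\,\mu$. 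Here
\[
\bar w=\mathbb{E}\Bigl[\tfrac{1}{|S_\mathcal{D}|}\sum_{v}\tfrac{1}{|S_{\mathcal{D}^\text{local}_v}|}\sum_u w_v(u)\Bigr],
\]
which is exactly $\mathbb{E}[c]$ for the $c$ of \eqref{eq:batch_correction_factor}. In the same way, the uncorrected batch gradient $\nabla_\theta g_{S_\mathcal{D}}$, the inner double average in \eqref{eq:batch-estimator}, has expectation $\mu$.

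The objective \eqref{eq:biasobjective-appendix} therefore becomes $\|\bar w\mu-c\mu\|_2=|\bar w-c|\,\|\mu\|_2$. Because $\mu\neq\mathbf{0}$, this is minimized uniquely at $c=\bar w$, where it attains the value zero. The nonzero-mean assumption is used precisely here: it guarantees uniqueness, since otherwise every $c$ would be a minimizer. Equivalently, one can expand $\|a\mu-c\mu\|^2$ as a quadratic in $c$ and set its derivative to zero.

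The main obstacle is interpretive rather than technical. The statement treats $c$ as a deterministic scalar inside the norm, while \eqref{eq:batch_correction_factor} defines $c$ as a sample average. I would handle this by showing that the minimizer $\bar w$ is the expectation of the $c$ in \eqref{eq:batch_correction_factor}, so that \eqref{eq:batch_correction_factor} is its unbiased plug-in estimate. If $c$ is instead kept random inside the expectation $\mathbb{E}[c\,\nabla_\theta g_{S_\mathcal{D}}]$, the argument needs $c$ to be independent of the gradients. That independence is exactly what the assumption that the gradients are independent of $p_v(u)/q_v(u)$ supplies, because $c$ is a function of the weights alone. In that reading $\mathbb{E}[c\,\nabla_\theta g_{S_\mathcal{D}}]=\mathbb{E}[c]\,\mu=\bar w\,\mu$, so \eqref{eq:batch_correction_factor} makes the objective zero, which is its minimum.

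A secondary point is the factorization step when the sample sizes $|S_{\mathcal{D}^\text{local}_v}|$ are themselves random. I would condition on the sample sizes and on the realized weights, apply the independence assumption conditionally, and then take the outer expectation. This works because the normalizations $1/|S|$ make each conditional average equal to $\mu$ times an average of weights.
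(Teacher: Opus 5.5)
Your core computation matches the paper's. Both expectations collapse to scalar multiples of $\mu$, and minimizing $\|a\mu - c\mu\|_2$ over $c$ gives $c=a$; the paper writes this as the projection $c=(\mu^\intercal\mu)^{-1}\mu^\intercal\mu\, s_L$. In both versions $\mu\neq\mathbf{0}$ makes the minimizer well defined and unique.

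The gap is in what you take $a$ to be. You average over the node and neighbor sampling as well as over the gradients, so you get $a=\bar w=\mathbb{E}[c]$. Your argument therefore shows that the minimizer is the \emph{expectation} of \eqref{eq:batch_correction_factor}, not \eqref{eq:batch_correction_factor} itself. Under that reading the statement does not hold as written, because the realized sample average of the ratios $p_v(u)/q_v(u)$ generally differs from $\bar w$. Your first patch, calling \eqref{eq:batch_correction_factor} an unbiased plug-in estimate of the minimizer, is a strictly weaker claim than the theorem's. Your second patch changes the objective by moving $c$ inside the expectation. There \eqref{eq:batch_correction_factor} does attain zero, but so does the constant $\bar w$ and any other weight-measurable scalar with mean $\bar w$. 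So \eqref{eq:batch_correction_factor} is no longer singled out, which undercuts the uniqueness you emphasize.

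The missing step is the reading the paper uses: take the expectation conditionally on the realized samples $S_\mathcal{D}$ and $S_{\mathcal{D}^\text{local}_v}$, and hence on the ratios $p_v(u)/q_v(u)$, so that only the gradients are random. The i.i.d.\ and independence assumptions then give $\mathbb{E}[\nabla_\theta g_{S_\mathcal{D}}]=\mu$ and $\mathbb{E}[\nabla_\theta\tilde{L}^{\text{corr}}]=s_L\mu$. Here $s_L$ is literally the sample average in \eqref{eq:batch_correction_factor}. You can see this in the paper's proof, where the sums over $S_\mathcal{D}$ and $S_{\mathcal{D}^\text{local}_v}$ stay outside the expectation. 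Your one-dimensional least-squares argument then yields $c=s_L$ exactly and uniquely. You already do this conditioning in your final paragraph. Stop there: taking the outer expectation afterwards is what turns $s_L$ into $\mathbb{E}[s_L]$ and loses the exact identification with \eqref{eq:batch_correction_factor}.
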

\begin{proof}
After squaring the objective function~\eqref{eq:biasobjective-appendix}, standard vector calculus gives the orthogonal projection of $\mathbb{E} \left[ \nabla_\theta \tilde{L}^{\text{corr}} \right]$ onto $\mathbb{E} \left[ \nabla_\theta g_{S_\mathcal{D}} \right]$:

\begin{equation}
\begin{split}
& \frac{\partial}{\partial c} \left\Vert \mathbb{E} \left[ \nabla_\theta \tilde{L}^{\text{corr}} \right] - c \mathbb{E} \left[ \nabla_\theta g_{S_\mathcal{D}} \right] \right\Vert_2^2 = 0 \Rightarrow c = \\
& \left( \mathbb{E} \left[ \nabla_\theta g_{S_\mathcal{D}}\right]^\intercal \mathbb{E} \left[\nabla_\theta g_{S_\mathcal{D}} \right]\right)^{-1} \mathbb{E} \left[ \nabla_\theta g_{S_\mathcal{D}}\right]^\intercal \mathbb{E} \left[\nabla_\theta \tilde{L}^{\text{corr}} \right]
\end{split}
\end{equation}

By the assumptions of the theorem:

\begin{align}
\mathbb{E} \left[ \nabla_\theta g_{S_\mathcal{D}} \right] & = \frac{1}{|S_\mathcal{D}|} \sum_{v\in S_\mathcal{D}} \frac{1}{|S_{\mathcal{D}^\text{local}_v}|} \sum_{u\in S_{\mathcal{D}^\text{local}_v}} \mu = \mu \\
\begin{split}
\mathbb{E} \left[ \nabla_\theta \tilde{L}^{\text{corr}} \right] & = \frac{1}{|S_\mathcal{D}|} \sum_{v\in S_\mathcal{D}} \frac{1}{|S_{\mathcal{D}^\text{local}_v}|} \sum_{u\in S_{\mathcal{D}^\text{local}_v}} \frac{p_v(u)}{q_v(u)} \mu \\ & = s_L \mu
\end{split}
\end{align}

where $s_L =\frac{1}{|S_\mathcal{D}|} \sum_{v\in S_\mathcal{D}} \frac{1}{|S_{\mathcal{D}^\text{local}_v}|} \sum_{u\in S_{\mathcal{D}^\text{local}_v}} \frac{p_v(u)}{q_v(u)}$. This gives:

\begin{equation}
\begin{split}
& c = \left( \mu^\intercal \mu\right)^{-1} \mu ^\intercal \mu s_L = s_L = \\
& \frac{1}{|S_\mathcal{D}|} \sum_{v\in S_\mathcal{D}} \frac{1}{|S_{\mathcal{D}^\text{local}_v}|} \sum_{u\in S_{\mathcal{D}^\text{local}_v}} \frac{p_v(u)}{q_v(u)}
\end{split}
\end{equation}
\end{proof}
That is, $\nabla_\theta \tilde{L}^{\text{batch}}$ is the estimator with batch-level correction whose expectation is closest (in $L^2$-distance) to the expectation of $\nabla_\theta \tilde{L}^{\text{corr}}$ under mild distributional assumptions.

\paragraph{Shrinkage} The estimators above assume idealized conditions. In practice, accepting some degree of bias to shrink coefficients can improve stability and combat overfitting \cite{james1961estimation}. We find that the following correction factor gives better accuracy than the minimally-biased version:
\begin{equation}\label{eq:resampling}
c^t_\text{resampling} = \frac{1}{\sum_{v\in S^t_{\mathcal{D}}}\left[ \frac{d_v^\text{global}}{d_v^{\text{local}}} -1 \right] |S_{\mathcal{D}^\text{local}_v}|}
\end{equation}
While $c^t$ uses an arithmetic mean of correction factors, $c^t_\text{resampling}$ is closer to a harmonic mean and produces lower correction factors overall. We interpret this as a form of shrinkage in gradient magnitude, similar to methods like gradient clipping and others that have been developed for deep learning applications \cite{pascanu2013difficulty,schulman2015trust}. Intuitively, $c^t_\text{resampling}$ weights the correction factors inversely by how often, in expectation, the global set of neighbors of a selected node would have to be resampled to get a sample consisting only of neighbors present in the local partition.

\subsection{Training Controller and Partitions Switching}
\label{subsec:design-controller}

Having established how to correct gradients within a super-epoch, we now describe how \systemname{} decides when to switch partitions.

The training process in \systemname{} is managed and monitored by a training controller that collects information on the overall training progress on each machine and for each partition. The controller aggregates sampling information within each partition and initiates repartitioning when appropriate.

The controller currently aims to ensure that each possible combination of chunks for a partition are explored for a fixed number of iterations, providing it with a target of $e / (c-1)$ super-epochs where $e$ is a fixed number of epochs to train for and $c$ is the number of chunks in the system. The controller monitors a simple \emph{coverage deficit} statistic per partition, $\Delta_t = 1 - \widehat{c}_t$, and triggers an earlier repartition when $\Delta_t$ persists across several steps, prioritizing partitions with low coverage. Gradient aggregation is performed synchronously within the current phase; repartitioning only updates metadata and does not introduce neighbor traffic. We empirically verify that this simple heuristic is sufficient for our needs (Section~\ref{sec:evaluation}). Therefore, we leave the implementation of more sophisticated switching heuristics for future work.

\subsection{Phase-Parallel Training}
\label{subsec:design-phase-training}

Gradient-only communication unlocks an additional benefit: flexible scheduling. Since partitions exchange no features or activations within an iteration, they need not run concurrently. \systemname{} exploits this independence to allow the sequential execution of partitions in phases. This strategy significantly amplifies training capacity, which benefits large input graphs that do not fit into the collective memory of all available machines.

When accommodating concurrent training on all partitions is unfeasible, \systemname{} can adopt a phase-parallel training approach that divides the partitions into distinct phases, each involving the execution of only a subset of the total partitions. Each phase runs standard data-parallel SGD over its $M$ active partitions: gradients are all-reduced within the phase, parameters are updated once, and the updated parameters and optimizer state are carried into the next phase. Over one epoch, all phases are executed so every partition contributes exactly one update per epoch, just serialized in time.

Algorithm~\ref{alg:phase-parallel} details phase-parallel training. By default $P = M$ (all partitions concurrent), but $M < P$ serializes phases; $M = 1$ trains arbitrarily large graphs on one machine.

\begin{algorithm}[h]
\caption{Phase Parallel Training in \systemname{}}
\label{alg:phase-parallel}
\KwIn{Graph $G$, number of partitions $P$, max partitions per phase $M$}
\KwOut{Trained model parameters}

Initialize model parameters $\theta$, optimizer state\;
Partition $G$ into $P$ partitions\;

\For{each epoch}{
 \For{phase $i \gets 1$ \KwTo $\lceil P/M \rceil$}{
 active $\gets$ partitions for phase $i$\;
 \tcp{Each active partition processes all its batches}
 \For{each iteration (mini-batch) across active partitions}{
 \ForPar{each worker $w$ with partition in active}{
 sampled $\gets$ isolated\_sampling(partition)\;
 grad $\gets$ backward\_pass(forward\_pass(sampled))\;
 scale grad by coverage factor $c$\;
 }
 all\_reduce(grad) across active workers\;
 $\theta \gets$ optimizer\_step($\theta$, grad)\;
 }
 \tcp{Carry $\theta$ and optimizer state to next phase}
 }
}
\KwRet{$\theta$}
\end{algorithm}

\systemname{}'s phase-parallel training enables trading training time for memory capacity.

\balance

\section{Implementation}
\label{sec:implementation}

\systemname{} follows a client–server architecture that enforces gradient-only communication during iterations. Servers store the graph structure and node features while clients issue sampling requests and drive training. During iterations, servers never exchange features or activations across partitions; cross-server traffic consists only of gradient all-reduce. At super-epoch boundaries, we perform feature movement for halo synchronization and partition switching.

We run one server and one client per partition, co-located on the same machine for locality. Each client samples subgraphs from its local partition on the CPU and feeds batches to the GPU for forward and backward passes. Gradients are computed locally and then aggregated across the concurrently active partitions to update model parameters. This preserves the per-iteration invariant of gradient-only communication while supporting both fully parallel and phase-parallel execution.

\paragraph{Estimator choice.}
We use batch-level coverage-corrected aggregation $\nabla_\theta \tilde{L}^{\text{batch}}(c)$ (Equation~\eqref{eq:batch-estimator}) with the resampling-based factor $c^t_{\text{resampling}}$ (Equation~\eqref{eq:resampling}). This design integrates cleanly with PyTorch as a single per-batch gradient scaling hook (applied immediately before the all-reduce), adds negligible overhead, and avoids re-implementing node-level gradient accounting. The node-level variant would require substantial custom machinery in PyTorch, complicating fair performance comparisons against highly optimized baselines. The results for the minimally biased variant are shown in Section~\ref{sec:minbiased}.

\paragraph{Memory management.}
Partition data is memory-mapped from disk (addressable in CPU memory but paged in on demand), reducing peak memory. Halo node features are pre-cached at super-epoch boundaries via all-gather, eliminating cross-partition reads during iterations. GPU gradient buffers use pinned allocation for efficient transfers.

\paragraph{Repartitioning.}
Workers advance through neighbor chunks in round-robin order, with a configurable epoch interval between switches. At each switch, workers load the new chunk's edges and synchronize halo features before resuming; a barrier ensures all workers complete before the next super-epoch begins.

\paragraph{Communication.}
Gradient synchronization uses PyTorch's NCCL backend for all-reduce. Feature gathering during repartitioning uses all-to-all collectives. We build \systemname{} on top of DGL~\cite{dgl} and PyTorch~\cite{pytorch}; the implementation adds approximately 4,500 lines of Python and C++ on top of DGL's partitioned graph infrastructure.

\section{Evaluation}
\label{sec:evaluation}

Our primary emphasis is on model-level metrics (train accuracy, test accuracy) and system-level metrics (throughput, runtime, capacity). We focus on providing in-depth answers to the following research questions.
\begin{enumerate}[label=\textbf{RQ\arabic*:}]
 \item How does the performance of \systemname{} compare to state-of-the-art distributed training systems in terms of training time and accuracy? (Section~\ref{subsec:eval-overall-comparison})
 \item How does \systemname{} scale as the number of graph partitions or the graph size increases? In particular, can phase-parallel training enable the training of very large graphs that do not fit in a single machine? (Section~\ref{subsec:eval-scalability})
 \item What are the overheads and benefits of the various components of \systemname{}? (Section~\ref{subsec:eval-overheads-ablation})
\end{enumerate}

\subsection{Experimental Setup}
\label{subsec:eval-setup}

\paragraph{GNN Models}
We use three standard GNN models for our experiments, illustrating a varied set of graph-based tasks.

\begin{description}
\item[GraphSage (Sage)~\cite{DBLP:conf/nips/HamiltonYL17}] is a general inductive framework that efficiently generates low-dimensional node embeddings for previously unseen data. At each layer, the hidden states of a node's neighbors are first aggregated (for example, by averaging). These states and the states from the previous layer are combined and multiplied by a matrix. This process yields the hidden state at the current layer.
\item[Graph Convolutional Network (GCN)~\cite{DBLP:conf/iclr/KipfW17}] \hfill\break
is a semi-supervised node-classification method. Each layer applies a linear map to hidden states (or initial features) and averages neighbors with degree-based weights.
\item[Graph Attention Network (GAT)~\cite{DBLP:conf/iclr/VelickovicCCRLB18}] assigns weights to different neighborhood nodes using attention. This approach makes aggregating neighborhood information more adaptable. In each layer, the hidden states (or features at the beginning) are modified by a matrix (a linear layer). The final hidden states of a node are determined by averaging its neighbors' hidden states using learned weights.
\end{description}

In this evaluation, we refer to a model with $k$ layers by appending the layers to the model's abbreviated name, e.g., a three-layer GraphSage model is referred to as Sage-3.

\paragraph{Datasets}

We use the datasets shown in Table~\ref{tab:datasets}, representing a diverse range of graph sizes and characteristics.

\begin{table}[h]
 \centering
 \caption{Datasets used in the experiments.}
 \label{tab:datasets}
 \setlength{\tabcolsep}{0.3em}
 \small{
 \begin{tabular}{|l|l|r|r|r|}
 \hline
 \textbf{Name} & \textbf{Dataset} & \textbf{Vertices} & \textbf{Edges} & \textbf{Features} \\
 \hline
 \textbf{OGBN-Ar} & OGBN ArXiv~\cite{ogb} & 0.16M & 1.11M & 128\\
 \textbf{Reddit} & Reddit~\cite{DBLP:conf/nips/HamiltonYL17} & 0.22M & 109.30M & 602 \\
 \textbf{OGBN-Pr} & OGBN Products~\cite{ogb} & 2.34M & 58.99M & 100 \\
 \textbf{OGBN-Pa} & OGBN Papers100M~\cite{ogb} & 105.92M & 1.51B & 128 \\
 \textbf{RMAT-X} & RMAT Scale X~\cite{rmat} & $2^X$ & $2^{X+4}$ & 128 \\
 \hline
 \end{tabular}
 }
\end{table}

In addition to these real-world datasets, we use synthetic graphs based on the Graph500 RMAT generator~\cite{rmat} in scalability experiments. RMAT graphs support different scales, denoted RMAT-X, corresponding to a synthetic graph with $2^{X}$ nodes following a power-law degree distribution with an average degree of 16. We use a feature dimension of 128 for all RMAT graphs. RMAT-26 contains approximately $2^{30} \approx 1.07 \ \text{billion}$ edges; RMAT-30 contains approximately $2^{34} \approx 17.18 \ \text{billion}$; RMAT-36 contains approximately $2^{40} \approx 1.10 \ \text{trillion}$. We use the official graph splits where available.

\paragraph{Baselines}

We compare \systemname{} against three carefully selected baselines: DGL~\cite{dgl}, MGG~\cite{mgg}, and Cluster-GCN~\cite{cluster-gcn}. These baselines represent a broad spectrum of GNN systems, covering both distributed and single-node settings and different training paradigms (full-graph and sampling).

\begin{description}
 \item[DGL] is a widely used distributed GNN training framework that supports both full-graph and sampling-based training. It is the closest state-of-the-art system to \systemname{}, as both systems use CPUs for sampling and GPUs for forward and backward computation. DGL provides a direct comparison with an established distributed GNN system.
 \item[MGG] is a recent, high-performance full-graph system optimized for multi-GPU training on a single node. It uses fast NVLink bandwidth between GPUs but does not support distributed execution. We include MGG as it represents the best single-node full-graph training system, providing a useful contrast to \systemname{}'s distributed design.
 \item[Cluster-GCN] precomputes clustered training batches to reduce runtime sampling and communication. It works only for GCN models, highlighting the specialization–generality trade-off in GNN systems.
\end{description}

We do not include ByteGNN~\cite{DBLP:journals/pvldb/ZhengCCSWLCYZ22}, PaGraph~\cite{DBLP:conf/cloud/LinLMLX20}, or MariusGNN~\cite{DBLP:conf/eurosys/WaleffeMRV23} quantitatively because they target different system points (CPU-centric or out-of-core training) and execution models; a direct numerical comparison would be apples-to-oranges and would obscure the distributed GPU setting we study. We therefore discuss them qualitatively in Section~\ref{sec:related-work}.

\paragraph{Hardware and Configuration}

We use two different systems for the evaluation. We evaluate distributed training systems on a cluster of 8 identical machines. Each machine is equipped with 2 Intel CPUs (Xeon Gold 6134M), 764~GB of DDR4 memory, and two Nvidia Tesla V100 PCIe 32~GB GPUs. These machines are connected by 2$\times$ 10~Gb/s network (joined as one bonding interface). For single node evaluation, we also use a machine with 2 AMD CPUs (EPYC 9654), 2~TB of DDR5, and 8 Nvidia Hopper H100 80~GB SXM GPUs that are fully connected with 900~GB/s NVLink. The operating system is a 64-bit Debian Linux (kernel version: 5.15); we use CUDA 11.8 and run all experiments in docker containers. We include results on the H100 NVLink node (e.g., Table~\ref{tab:mgg}) to reflect high-bandwidth intra-node interconnects.

\paragraph{Timing accounting} Unless stated otherwise, reported \emph{epoch time} excludes repartitioning/switching time; switching happens between super-epochs, is overlapped with I/O, and is reported separately in Table~\ref{tab:overheads}. We report the mean of 3 runs.

\paragraph{Comparability notes} Our V100 cluster experiments (Tables~\ref{tab:overall-comparison}--\ref{tab:ablation}) compare distributed training systems under realistic multi-node network constraints. H100 NVLink results (Table~\ref{tab:mgg}) evaluate single-node full-graph systems under best-case interconnect bandwidth, showing that \systemname{} remains competitive even when baselines enjoy NVLink speeds. These two settings answer different questions: the cluster shows scalability across nodes; the single node shows that isolation does not sacrifice per-GPU efficiency.

We strive to ensure comparable and fair results across all systems in each experiment. We maintain consistency by using the same batch size (1000) for all mini-batch training systems. We set the number of sampled neighbors at each layer to \{25,10\}, \{15,10,5\}, and \{20,15,10,5\} for models with 2, 3, and 4 layers, respectively. Moreover, we standardize key hyperparameters: a hidden dimension of 128, a learning rate of 0.003, a dropout rate of 0.5, and a training duration of 500 epochs are applied uniformly. We use the batch-level version of \systemname{}'s gradient correction $\nabla_\theta \tilde{L}^{\text{batch}}(c)$ \eqref{eq:batch-estimator} with resampling-based correction factors $c^t_\text{resampling}$ \eqref{eq:resampling}. While the batch-level correction is an approximation of the node-level estimator, it preserves convergence behavior (Section~\ref{sec:convergence}) and achieves comparable or better accuracy than the minimally-biased variant (Section~\ref{sec:minbiased}). We also conducted hyperparameter sensitivity experiments that showed almost no discernible differences in performance across different configurations when comparing the systems, indicating that our chosen settings are representative and fair. Finally, we make a concerted effort to optimally configure all systems under comparison and run each experiment multiple times to ensure statistically significant results.

\subsection{Training Efficiency and Accuracy (RQ1)}
\label{subsec:eval-overall-comparison}

\subsubsection{Epoch Time}

We compare epoch time between \systemname{} and DGL for datasets in Table~\ref{tab:datasets} and 2--4 layer models, using 16 partitions on 16 GPUs (Table~\ref{tab:overall-comparison}).

\begin{table*}[t]
   \centering
   \caption{Average epoch time and test accuracy across all systems, models, model configurations, and datasets. Datasets are partitioned into 16 partitions and run in 16 GPUs. We run this experiment on our 8-machine V100 cluster.
}
   \label{tab:overall-comparison}
   \setlength{\tabcolsep}{0.5em}
\begin{tabular}{|l|l|r|r|r|r||r|r|r|r|r|}
\hline
        &            & \multicolumn{4}{c||}{\textbf{Average epoch time}} & \multicolumn{4}{c|}{\textbf{Test accuracy}} \\
\hline
\textbf{Model}& \textbf{Systems}&\textbf{OGBN-Ar} &\textbf{Reddit}&\textbf{OGBN-Pr}&\textbf{OGBN-Pa}&\textbf{OGBN-Ar}&\textbf{Reddit}&\textbf{OGBN-Pr}&\textbf{OGBN-Pa} \\
\hline
 \textbf{Sage-2} & DGL-Random  & 0.6 s  & 6.2 s    & 4.2 s     & 15.5 s  & 0.5630 & 0.9626   & 0.7748   & 0.4877 \\
 \textbf{Sage-2} & DGL-Metis   & 0.4 s  & 3.7 s    & 2.3 s     & 9.1 s   & 0.5649 & 0.9633   & 0.7756   & 0.4866\\
 \textbf{Sage-2} & \systemname  & 0.4 s  & 1.0 s    & 1.3 s     & 3.9 s   & 0.5571 & 0.9633   & 0.7688   & 0.4676 \\

\hline
\textbf{GCN-2} & DGL-Random    & 0.6 s  & 6.2 s    & 4.2 s     & 15.0 s  & 0.5078 & 0.9211   & 0.7651   & 0.4734  \\
\textbf{GCN-2} & DGL-Metis     & 0.4 s  & 3.7 s    & 2.3 s     & 9.1 s   & 0.5055 & 0.9209   & 0.7676   & 0.4761  \\
\textbf{GCN-2} & \systemname    & 0.4 s  & 1.0 s    & 1.3 s     & 3.8 s   & 0.5416 & 0.9472   & 0.7663   & 0.4886  \\

\hline
\textbf{GAT-2} & DGL-Random    & 0.7 s  & 6.3 s    & 4.4 s     & 14.9 s  & 0.5682 & 0.9493   & 0.7881   & 0.5115 \\
\textbf{GAT-2} & DGL-Metis     & 0.5 s  & 3.8 s    & 2.4 s     & 9.9 s   & 0.5741 & 0.9525   & 0.7902   & 0.5130  \\
\textbf{GAT-2} & \systemname    & 0.4 s  & 1.1 s    & 1.4 s     & 5.1 s   & 0.5430 & 0.9476   & 0.7829   & 0.4984  \\

\hline
\textbf{Sage-3} & DGL-Random   & 0.9 s  & 9.7 s    & 8.5 s     & 22.9 s  & 0.5553 & 0.9593   & 0.7513   & 0.4435  \\
\textbf{Sage-3} & DGL-Metis    & 0.6 s  & 6.2 s    & 4.5 s     & 16.1 s  & 0.5567 & 0.9635   & 0.7527   & 0.4489  \\
\textbf{Sage-3} & \systemname   & 0.4 s  & 1.0 s    & 1.4 s     & 4.9 s   & 0.5532 & 0.9589   & 0.7738   & 0.4259  \\

\hline
\textbf{GCN-3} & DGL-Random    & 0.9 s  & 9.8 s    & 8.5 s     & 23.2 s  & 0.4984 & 0.6527   & 0.6132   & 0.4355  \\
\textbf{GCN-3} & DGL-Metis     & 0.6 s  & 6.0 s    & 4.5 s     & 16.1 s  & 0.4857 & 0.6593   & 0.6164   & 0.4384  \\
\textbf{GCN-3} & \systemname    & 0.4 s  & 1.0 s    & 1.4 s     & 4.8 s   & 0.5172 & 0.9326   & 0.7502   & 0.4333  \\

\hline
\textbf{GAT-3} & DGL-Random    & 1.0 s  & 10.0 s   & 9.0 s     & 23.7 s  & 0.5688 & 0.8999   & 0.7160   & 0.4971  \\
\textbf{GAT-3} & DGL-Metis     & 0.7 s  & 6.4 s    & 5.1 s     & 20.4 s  & 0.5755 & 0.8650   & 0.7182   & 0.5130 \\
\textbf{GAT-3} & \systemname    & 0.5 s  & 1.1 s    & 1.6 s     & 5.8 s   & 0.5487 & 0.9251   & 0.7855   & 0.4902  \\

\hline
\textbf{Sage-4} & DGL-Random   & 1.5 s  & 16.6 s   & 40.2 s    & 54.2 s  & 0.5237 & 0.9514   & 0.6692   & 0.4237  \\
\textbf{Sage-4} & DGL-Metis    & 1.0 s  & 13.1 s   & 20.0 s    & 39.2 s  & 0.5285 & 0.9486   & 0.6718   & 0.4297  \\
\textbf{Sage-4} & \systemname   & 0.4 s  & 1.3 s    & 1.5 s     & 5.7 s   & 0.5428 & 0.9633   & 0.7946   & 0.4174  \\

\hline
\textbf{GCN-4} & DGL-Random    & 1.5 s  & 16.7 s   & 40.1 s    & 53.8 s  & 0.5121 & 0.2540   & 0.2658   & 0.4113  \\
\textbf{GCN-4} & DGL-Metis     & 1.0 s  & 13.1 s   & 20.0 s    & 39.4 s  & 0.5203 & 0.1091   & 0.2737   & 0.4204  \\
\textbf{GCN-4} & \systemname    & 0.4 s  & 1.3 s    & 2.5 s     & 5.7 s   & 0.5190 & 0.9180   & 0.7124   & 0.4203  \\

\hline
\textbf{GAT-4} & DGL-Random    & 1.7 s  & 18.0 s   & 55.6 s    & 69.0 s  & 0.5619 & 0.6755   & 0.4423   & 0.4833  \\
\textbf{GAT-4} & DGL-Metis     & 1.1 s  & 14.0 s   & 22.3 s    & 43.9 s  & 0.5735 & 0.6476   & 0.4430   & 0.4876  \\
\textbf{GAT-4} & \systemname    & 0.5 s  & 1.5 s    & 3.2 s     & 7.6 s   & 0.5249 & 0.9141   & 0.7784   & 0.4914  \\
\hline

\end{tabular}

\end{table*}

DGL-Metis is faster than DGL-Random via fewer edge cuts, but adds expensive preprocessing. \systemname{} avoids cross-partition feature/activation exchange within iterations (gradient-only), yielding $1\times$--$13\times$ speedups ($4\times$ average). In summary, \ul{\systemname{} cuts training time by up to an order of magnitude compared to DGL.}

\subsubsection{Accuracy}

Having established \systemname{}'s speed advantage, we now examine model quality. Table~\ref{tab:accuracy} summarizes accuracy. \systemname{} averages 68\% across the tested scenarios, compared to 61\% for both DGL systems. This gap grows with depth: 2-layer models are comparable (about 69\%), while DGL drops to around 63\% and 51\% for 3- and 4-layer models, respectively, whereas \systemname{} stays near 67\%. We hypothesize that this may stem from a dropout-like effect in \systemname{}'s isolated training, where input nodes are effectively removed at random, potentially reducing overfitting; the ablation study in Section~\ref{subsec:eval-overheads-ablation} provides supporting evidence. Convergence behavior is similar to DGL and discussed in Section~\ref{sec:convergence}.
\begin{table}[H]
\centering
 \caption{Average test accuracy across model depths.}
 \label{tab:accuracy}
\begin{tabular}{|l|ccc|c|}
\hline
& \textbf{2-layer} & \textbf{3-layer} & \textbf{4-layer} & \textbf{Overall} \\
\hline
\textbf{DGL-Random} & 0.6894 & 0.6326 & 0.5145 & 0.6122 \\
\textbf{DGL-Metis} & 0.6909 & 0.6328 & 0.5045 & 0.6094 \\
\textbf{\systemname{}} & 0.6894 & 0.6746 & 0.6664 & 0.6768 \\
\hline
\end{tabular}
\end{table}
Table~\ref{tab:accuracy} shows 2-layer models are comparable across systems, while 3- and 4-layer models see large gaps in favor of \systemname{}, consistent with deeper models suffering more from cross-partition sampling noise. In summary, \ul{\systemname{} achieves comparable or better training accuracy than other state-of-the-art GNN training systems, with the gap widening as models get deeper.}

\subsubsection{Sampling Overhead}

We compare \systemname{} with Cluster-GCN, a sampling-based system that reduces communication via precomputed clusters. Cluster-GCN eliminates on-the-fly sampling and reduces cross-partition overhead. Table~\ref{tab:cluster-gcn} shows the average epoch time and test accuracy on GCN-2 for OGBN-Pr.

\begin{table}[h]
\centering
 \caption{Average epoch time and test accuracy of Cluster-GCN and \systemname{} on GCN-2 for OGBN-Pr. Experiment run on our H100 machine.}
 \label{tab:cluster-gcn}
\begin{tabular}{|l|c|c|}
\hline
\textbf{System} & \textbf{Epoch time} & \textbf{Test accuracy} \\ \hline
\textbf{Cluster-GCN} & 1.2 s & 0.7390 \\
\textbf{\systemname{}} & 2.6 s & 0.7663 \\ \hline
\end{tabular}
\end{table}

Although Cluster-GCN achieves an epoch time of around half that of \systemname{}, \systemname{} consistently achieves better model accuracy. Moreover, creating the clusters for OGBN-Pr takes 881 seconds for this dataset. In comparison, \systemname{}'s total overhead for this experiment amounts to 49 (partitioning) + 76 (all super-epoch switches) = 125 seconds. If we consider a generous 500-epoch training run, the total time for Cluster-GCN is 1481 seconds, whereas \systemname{} requires a total time of 1425. Finally, it is noteworthy that Cluster-GCN is designed explicitly for GCNs and does not extend its benefits to other GNN architectures. Cluster-GCN also eliminates most neighbor traffic via precomputed clusters whereas our approach eliminates it during iterations and retains generality across GNN architectures. Specialized methods like Cluster-GCN underscore the benefits of pre-processing to reduce communication overhead but cannot support broader applications.

\subsubsection{Full-Graph Training Time}

Finally, we compare \systemname{} with MGG, a high-performance single-node full-graph system using NVLink. Table~\ref{tab:mgg} shows epoch time for GCN-2 across datasets and GPU counts; \systemname{} runs in full-graph mode here.

\begin{table}[h]
\centering
 \caption{Average epoch time for MGG and \systemname{} (full-graph training mode) to train a GCN-2 model for different datasets and across different GPU configurations. Experiment run on a machine with 8$\times$H100 with full NVLink connections.}
 \label{tab:mgg}
\begin{tabular}{|l|l|c|c|c|}
\hline
\textbf{Dataset} & \textbf{Method} & \textbf{2 GPU} & \textbf{4 GPU} & \textbf{8 GPU} \\ \hline
\multirow{2}{*}{Reddit}
& \textbf{MGG} & 0.2 s & 0.2 s & 0.2 s \\ \cline{2-5}
& \textbf{\systemname{}} & 0.4 s & 0.2 s & 0.1 s \\ \hline
\multirow{2}{*}{OGBN-Pr}
& \textbf{MGG} & 0.3 s & 0.2 s & 0.2 s \\ \cline{2-5}
& \textbf{\systemname{}} & 0.4 s & 0.2 s & 0.1 s \\ \hline
\multirow{2}{*}{OGBN-Pa}
& \textbf{MGG} & Crash & Crash & 2.8 s \\ \cline{2-5}
& \textbf{\systemname{}} & 5.7 s & 3.5 s & 2.4 s \\ \hline
\end{tabular}
\end{table}

\begin{figure*}[t]
 \centering
 \includegraphics[width=0.23\textwidth]{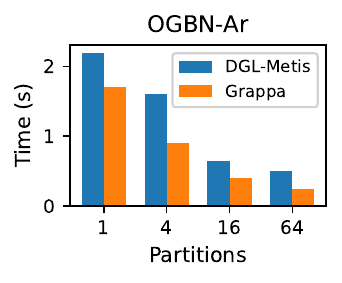}
 \hfill
 \includegraphics[width=0.23\textwidth]{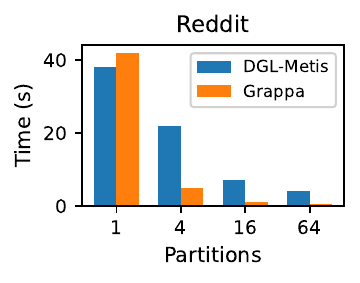}
 \hfill
 \includegraphics[width=0.23\textwidth]{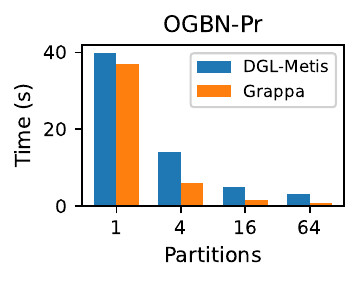}
 \hfill
 \includegraphics[width=0.23\textwidth]{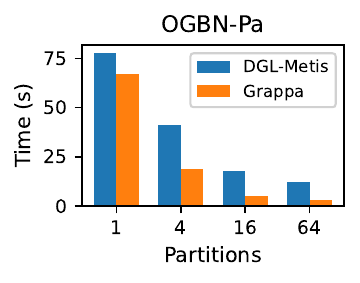}
 \caption{Epoch time vs.\ number of partitions for \systemname{} and DGL (Sage-3, 8$\times$V100 cluster).}
 \label{fig:scalability-partitions}
\end{figure*}

MGG achieves subsecond epoch time on small datasets (OGBN-Pr and Reddit) and 2.1 seconds for the larger OGBN-Pa, faster than similar systems such as BNS-GCN~\cite{DBLP:conf/mlsys/WanLLKL22} and NeutronStar~\cite{DBLP:conf/sigmod/WangZWCZY22}. However, MGG must store the entire graph on HBM and runs out of memory when processing OGBN-Pa with only 2 and 4 GPUs. In comparison, \systemname{} avoids this memory limitation and speeds up full-graph epoch times by eliminating communication overhead.

Even if MGG supported multi-node execution, it would need expensive communication links and would still require storing the entire graph on GPUs. This constraint shows why \systemname{}'s distributed architecture can handle larger datasets across multiple nodes without the same memory limitations. Key takeaway: \systemname{}'s flexible and isolated training design outperforms single-node systems like MGG without suffering from the same communication bottleneck.

\subsection{Scalability and Capacity (RQ2)}
\label{subsec:eval-scalability}

We now evaluate the scalability of \systemname{} as we increase the number of partitions and the graph size. We also show how \systemname{}'s phase-parallel training allows the system to train very large graphs using limited resources.

\subsubsection{Increasing Number of Partitions}

We measure the average epoch time using Sage-3 and the different datasets in Table~\ref{tab:datasets} as we increase the number of partitions from 1 to 64. Figure~\ref{fig:scalability-partitions} shows the training time per epoch for \systemname{} and the DGL baseline with METIS partitioning as we vary the number of partitions.

DGL scales poorly compared to \systemname{} as partitions increase, despite high-quality METIS cuts. For instance, from 1 to 16 partitions, DGL achieves only $\sim 3$--$5 \times$ speedup instead of the ideal $16 \times$, due to growing sampling communication overheads. In contrast, \systemname{} scales almost linearly ($\sim 16 \times$) on all datasets except OGBN-Ar, whose small, highly connected graph amplifies system overheads relative to epoch time. In summary, \ul{\systemname{}'s training time scales nearly linearly with partition count.}

\subsubsection{Increasing Graph Size}

Having shown that \systemname{} scales with partition count, we now fix the partition count and increase graph size. To that end, we train Sage-3 on RMAT~\cite{rmat} synthetic graphs and generate random vertex features, doubling the graph size from RMAT-26 to RMAT-30 each time. By sampling each partition in isolation, \systemname{} avoids superlinear blowups from communication: epoch time grows from 500\,s (RMAT-26) to 11,000\,s (RMAT-30), scaling roughly with the $16\times$ increase in edges. In summary, \ul{\systemname{}'s training time scales almost linearly with the graph size.}

\subsubsection{Training Very Large Graphs}

We now push capacity to the limit by training on graphs that do not fit on a single machine. We use Sage-3 and RMAT-36, partition it into 256 partitions, and train one partition at a time using our phase-parallel approach (Section~\ref{subsec:design-phase-training}). We find that \systemname{} can run one epoch of Sage-3 on RMAT-36 in 3.6 hours on a single V100 machine. While training such a model to converge would take a long time, these results show that it can be done on a single machine if needed. \ul{This result demonstrates the feasibility of processing trillion-edge topologies on a single machine in phase-parallel mode with gradient-only communication, under a realistic memory budget.}

\subsection{Ablation Study and Overheads (RQ3)}
\label{subsec:eval-overheads-ablation}

Finally, we conduct an ablation study to quantify the contribution of each component and the overheads of \systemname{}.

\subsubsection{Ablation Study}

We conduct an ablation study to assess the contribution of each component of \systemname{} by comparing it with three variants:
\emph{\systemname{}-UW}, without coverage-corrected aggregation;
\emph{\systemname{}-FP}, with fixed (non‑switchable) partitions; and
\emph{\systemname{}-50S}, which relaxes isolation by sampling 50\% of vertices from remote partitions. Table~\ref{tab:ablation} presents the results of this experiment.

\begin{table}[h]
 \centering
 \caption{Ablation study. We train each of our 3-layer models on the Reddit dataset with a fixed number of 16 partitions on our 8-machine V100 cluster for up to 500 epochs and report the test accuracy.}
 \label{tab:ablation}
 \setlength{\tabcolsep}{0.5em}
 \small{
 \begin{tabular}{|l|c|c|c|}
 \hline
 & \textbf{GCN-3} & \textbf{Sage-3} & \textbf{GAT-3} \\
 \hline
 \textbf{\systemname{}} & 0.9326 & 0.9589 & 0.9251 \\
 \hline
 \textbf{\systemname{}-UW} & 0.8284 & 0.9636 & 0.8403 \\
 \textbf{\systemname{}-FP} & 0.8464 & 0.9556 & 0.9068 \\
 \textbf{\systemname{}-50S} & 0.6593 & 0.9635 & 0.8999 \\
 \hline
 \end{tabular}
 }
\end{table}

\emph{\systemname{}-FP} reduces accuracy across all models, confirming that repartitioning is essential. \emph{\systemname{}-UW} hurts GCN and GAT substantially; Sage is less sensitive and even slightly higher here, suggesting that coverage correction matters most for models that aggregate more aggressively. Most striking is \emph{\systemname{}-50S}: despite 50\% remote neighbor access, it underperforms full \systemname{} on GCN-3 (0.66 vs.\ 0.93) and GAT-3 (0.90 vs.\ 0.93). One possible explanation is a regularization effect analogous to dropout that outweighs missing-neighbor information loss. The effect is model-dependent: GAT and GCN benefit more from isolation than Sage, where \emph{\systemname{}-50S} edges ahead.
Each component of \systemname{} contributes to model quality, and full isolation paired with correction outperforms partial isolation with more neighbor access.

\subsubsection{Overheads}

\systemname{}'s design introduces three kinds of overheads: partitioning, switching partitions, and redundant storage. Table~\ref{tab:overheads} summarizes the first two.

\begin{table}[h]
 \centering
 \caption{Overhead analysis on 8$\times$V100 cluster. Partitioning: OGBN-Pa. Switching: Sage-3 on OGBN-Pr.}
 \label{tab:overheads}
 \setlength{\tabcolsep}{0.5em}
 \small{
 \begin{tabular}{|l|r|r|r|}
 \hline
 & \textbf{4 parts.} & \textbf{16 parts.} & \textbf{64 parts.} \\
 \hline
 \textbf{Partitioning (\systemname{})} & 2,110 s & 2,568 s & 2,345 s \\
 \textbf{Partitioning (DGL METIS)} & 6,262 s & 6,523 s & 7,292 s \\
 \hline
 \textbf{Switching time} & 128 s & 76 s & 51 s \\
 \textbf{Switching overhead} & 4\% & 10\% & 12\% \\
 \hline
 \end{tabular}
 }
\end{table}

Partitioning time remains stable as \systemname{} uses random graph partitioning, which is $2$--$5\times$ faster than DGL's METIS while still achieving high accuracy through repartitioning and correction. Switching overhead is at most 12\% of training time, which pales compared to the $>6\times$ speedups from isolated training.

\begin{table}[h]
 \centering
 \caption{Size of each partition and the total size of all partitions for different datasets using \systemname{}'s partitioning with 4 partitions. The numbers are in GB.}
 \label{tab:partition-storage}
 \setlength{\tabcolsep}{0.5em}
 \small{
 \begin{tabular}{|l|r|r|r|r|r|}
 \hline
 \textbf{Dataset} & \textbf{Part. 1} & \textbf{Part. 2} & \textbf{Part. 3} & \textbf{Part. 4} & \textbf{Total} \\
 \hline
 \multicolumn{6}{|c|}{\textbf{\systemname{}'s Random Partitioning}} \\
 \hline
 \textbf{OGBN-Ar} & 0.08 & 0.08 & 0.08 & 0.08 & 0.11 \\
 \textbf{Reddit} & 1.37 & 1.38 & 1.38 & 1.37 & 2.27 \\
 \textbf{OGBN-Pr} & 1.83 & 1.83 & 1.83 & 1.83 & 2.86 \\
 \textbf{OGBN-Pa} & 56.70 & 56.68 & 56.68 & 56.69 & 78.69 \\
 \hline
 \multicolumn{6}{|c|}{\textbf{DGL's METIS Partitioning}} \\
 \hline
 \textbf{OGBN-Ar} & 0.06 & 0.06 & 0.06 & 0.06 & 0.11 \\
 \textbf{Reddit} & 1.54 & 1.40 & 1.10 & 1.26 & 2.27 \\
 \textbf{OGBN-Pr} & 1.40 & 1.47 & 1.60 & 1.60 & 2.86 \\
 \textbf{OGBN-Pa} & 39.01 & 46.74 & 46.54 & 40.10 & 78.69 \\
 \hline
 \end{tabular}
 }
\end{table}

Table~\ref{tab:partition-storage} shows partition sizes: random partitioning yields nearly equal sizes, so speedups are not from load imbalance. METIS produces smaller but less balanced partitions; \systemname{} handles both.

\subsubsection{Minimally-Biased Estimator Results}
\label{sec:minbiased}

In the main text, we presented the results for the batch-level estimator $\nabla_\theta \tilde{L}^{\text{batch}}(c)$ \eqref{eq:batch-estimator} with resampling-based correction factors $c^t_\text{resampling}$ \eqref{eq:resampling}. Table~\ref{tab:minimally-biased} shows the test accuracy results for the minimally-biased estimator with equal selection probabilities for neighbors $c^t_\text{uniform}$~\eqref{eq:correction_uniform}. Comparing the accuracy to the results of the resampling-based estimator $c^t_\text{resampling}$~\eqref{eq:resampling} is inconclusive, but the resampling-based estimator has better accuracy in $10/18$ cases and in all cases for the GCN architecture and is more stable. More stability is expected for an estimator that achieves shrinkage and better accuracy hints at the benefits of accepting some degree of bias to avoid overfitting.
\begin{table}[h]
 \centering
 \caption{Test accuracy for the minimally-biased gradient estimator $\nabla_\theta \tilde{L}^{\text{batch}}(c)$ \eqref{eq:batch-estimator} with correction factors for a uniform neighbor distribution $c^t_\text{uniform}$~\eqref{eq:correction_uniform}.}
 \label{tab:minimally-biased}
 \setlength{\tabcolsep}{0.5em}
 \small{
 \begin{tabular}{|l|c|c|c|}
 \hline
 & \textbf{OGBN-Ar} & \textbf{Reddit} & \textbf{OGBN-Pr} \\
 \hline
 \textbf{Sage-2} & 0.5590 & 0.9627 & 0.7706 \\
 \textbf{GCN-2} & 0.4629 & 0.8943 & 0.6630 \\
 \textbf{GAT-2} & 0.5640 & 0.9483 & 0.7874 \\
 \hline
 \textbf{Sage-3} & 0.5529 & 0.9611 & 0.7815 \\
 \textbf{GCN-3} & 0.4433 & 0.8414 & 0.6919 \\
 \textbf{GAT-3} & 0.5714 & 0.9083 & 0.7786 \\
 \hline
 \end{tabular}
 }
\end{table}
\subsubsection{Convergence Behavior}\label{sec:convergence}

We evaluate the convergence of each system in a selected subset of scenarios from Table~\ref{tab:overall-comparison}. We pick GCN-2 as the model on the OGBN-Pr dataset. Figure~\ref{fig:training-convergence} shows the evolution of the training accuracy.

\begin{figure}[t]
 \centering
 \includegraphics[width=0.35\textwidth]{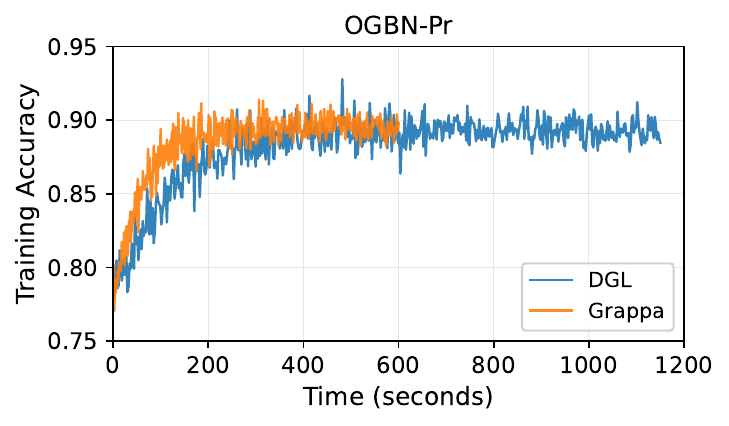}
 \caption{Training accuracy over time for GCN-2 with 16 partitions (500 epochs). Both systems converge similarly; \systemname{}'s epoch time is almost half of DGL's.}
 \label{fig:training-convergence}
\end{figure}

We observe no meaningful differences in training accuracy across systems. Moreover, in each case, the training process converges similarly and within roughly the same number of epochs. However, \systemname{}'s epoch time is almost half of DGL's. The other models and datasets in Table~\ref{tab:overall-comparison} exhibit similar patterns.

\section{Related Work}
\label{sec:related-work}

\paragraph{Distributed GNN Training} Systems such as P3~\cite{DBLP:conf/osdi/GandhiI21}, DistDGL~\cite{DBLP:conf/sc/Zheng0WZSSGZK20}, ROC~\cite{DBLP:conf/mlsys/JiaLGZA20}, Aligraph~\cite{aligraph}, BNS-GCN~\cite{DBLP:conf/mlsys/WanLLKL22}, NeutronStar~\cite{DBLP:conf/sigmod/WangZWCZY22}, AGL~\cite{DBLP:journals/pvldb/ZhangHL0HSGWZ020}, BGL~\cite{DBLP:conf/nsdi/LiuC00ZHPCCG23}, and ByteGNN~\cite{DBLP:journals/pvldb/ZhengCCSWLCYZ22} enable distributed training. P3 minimizes input feature transfer via model parallelism in the first layer, but subsequent hidden state exchange grows costly with wider dimensions~\cite{DBLP:conf/osdi/GandhiI21}. CPU-centric systems such as Dorylus~\cite{DBLP:conf/osdi/ThorpeQETHJWVNK21}, ByteGNN~\cite{DBLP:journals/pvldb/ZhengCCSWLCYZ22}, and AGL~\cite{DBLP:journals/pvldb/ZhangHL0HSGWZ020} suffer longer runtimes than GPU-based approaches. DistDGL~\cite{DBLP:conf/sc/Zheng0WZSSGZK20}, BGL~\cite{DBLP:conf/nsdi/LiuC00ZHPCCG23}, and AGL reduce communication through caching and partitioning. In contrast, \systemname{} does not rely on communication or caching optimizations as it removes cross-partition neighbor traffic entirely.

\paragraph{Multiple GPUs Training} Recent single‑node systems (MGG, Legion, XGNN, NeutronTask/NeutronTP) aggressively overlap or cache to tame PCIe/NVLink traffic~\cite{DBLP:conf/sosp/SongZ0C23,DBLP:conf/eurosys/YangTSWYCYZ22,DBLP:conf/usenix/SunSSSWWZLYZW23,DBLP:conf/cloud/LinLMLX20,mgg}. In contrast, \systemname{} does not rely on fast interconnects, multi‑GPU caches, and complex partitioning because it eliminates neighbor exchange altogether and caching overhead.

\paragraph{Graph Sampling Optimizations} GPU samplers like gSampler~\cite{DBLP:conf/sosp/GongLMC0LWL23} and NextDoor~\cite{DBLP:conf/eurosys/JangdaPGS21}, and CPU designs like FlashMob~\cite{DBLP:conf/sosp/YangMTCW21}, accelerate sampling but mostly in single‑machine settings. GraphSAINT~\cite{DBLP:conf/iclr/ZengZSKP20} improves scalability via subgraph sampling with unbiased normalization, extending GraphSAGE~\cite{DBLP:conf/nips/HamiltonYL17}. However, all still require remote neighbor access in partitioned graphs. \systemname{} removes neighbor exchange entirely, correcting coverage bias with importance weighting while remaining compatible with prior optimizations.

\paragraph{Precomputation / Out-of-Core} Cluster-GCN~\cite{cluster-gcn} uses METIS~\cite{metis} partitions for batching. \systemname{} instead forms partitions via a lightweight chunk-sweep schedule that covers cross-chunk edges across super-epochs. This sequential sweeping is simple, model-agnostic, prefetch-friendly, and avoids precomputation. Unlike systems that still exchange features or activations across partitions, \systemname{} communicates only gradients. MariusGNN~\cite{DBLP:conf/eurosys/WaleffeMRV23} targets single-machine out-of-core training, whereas \systemname{} scales capacity via distributed, phase-parallel execution without caching.

\paragraph{Local Averaging \& Federated GNNs} Our gradient‑only synchronization and phase‑parallel execution mirror local and decentralized SGD, which reduce communication through periodic averaging while still ensuring convergence under mild conditions~\cite{stich2018local,lian2017can}. Unlike federated GNNs, which exchange updates to address privacy and non‑IID client data~\cite{mcmahan2017communication,yao2023fedgcn,liu2024federated}, \systemname{} focuses on system scalability and bias correction through coverage‑aware scaling. Federated GNNs typically assume disjoint, non‑IID client subgraphs and prioritize privacy and partial participation; in that regime, client drift and structural heterogeneity dominate. \systemname{} operates on a centrally managed graph and can repartition to ensure long‑run neighbor coverage, so our coverage‑corrected scaling addresses system‑induced sampling bias rather than client drift. These techniques naturally complement federated training and may provide a path toward more efficient and robust federated GNNs.

\section{Conclusions}
\label{sec:conclusions}

\systemname{} demonstrates that the standard assumption—distributed GNN training requires cross-partition feature and activation exchange—can be relaxed. By training partitions in isolation with periodic repartitioning and coverage-corrected gradient aggregation, \systemname{} achieves up to $13 \times$ faster training while matching or exceeding state-of-the-art accuracy, particularly on deeper models where its dropout-like isolation reduces overfitting. Phase-parallel execution extends capacity to trillion-edge graphs on a single commodity server. Looking ahead, our coverage-correction and repartitioning techniques may transfer to federated and privacy-preserving GNN settings where cross-partition data exchange is restricted by design.

\bibliographystyle{ACM-Reference-Format}
\bibliography{bibliography/main}

\end{document}